%% file: main.tex
\documentclass[runningheads]{llncs}
\input{commands}

\title{Mining Beyond the Bools: Learning Data Transformations and Temporal Specifications}
\titlerunning{Learning   Data Transformations and Temporal Specifications} 

\author{ Sam Nicholas Kouteili\inst{1} \and William Fishell\inst{2} \and 
 Christian Scaff\inst{2} \and 
 Mark Santolucito\inst{2}$^\dagger$ \and Ruzica Piskac\inst{1}}
 \authorrunning{Kouteili et al.}

 \authorrunning{Kouteili et al.}

 \institute{
 Yale University \and
 Columbia University ($\dagger$ Barnard College, Columbia University)
 }

\begin{document}

\maketitle
\vspace{-0.7cm}
\begin{abstract}
    Mining specifications from execution traces presents an automated way of capturing characteristic system behaviors. However, existing approaches are largely restricted to Boolean abstractions of events, limiting their ability to express data-aware properties. In this paper, we extend mining procedures to operate over richer datatypes. 
    We first establish candidate functions in our domain that cover the set of traces
    by leveraging Syntax-Guided Synthesis (SyGuS) techniques. To capture these function applications temporally, we formalize the semantics of TSL$_f$, a finite-prefix interpretation of Temporal Stream Logic (TSL) that extends LTL$_f$ with support for first-order predicates and functional updates. This allows us to unify a corresponding procedure for learning the data transformations and temporal specifications of a system. We demonstrate our approach synthesizing reactive programs from mined specifications on the OpenAI-Gymnasium ToyText environments, finding that our method is more robust and orders of magnitude more sample-efficient than passive learning baselines on generalized problem instances. 
    
\end{abstract}



\input{sections/1-introduction}

\input{sections/2-motivating}
\input{sections/3-formalisms}
\input{sections/4-mining}
\input{sections/6-evaluation}
\input{sections/2-related}

\input{sections/7-conclusion}
\bibliographystyle{splncs04}
\bibliography{citations}

\appendix
\include{appendix/proofs}

\include{appendix/evalextra}

\end{document}

%% file: commands.tex
\usepackage{graphicx} 
\usepackage{soul}
\usepackage{listings}
\usepackage[numbers]{natbib}
\usepackage{amsmath}
\usepackage{hyperref}
\usepackage{url}
\usepackage{cleveref}
\usepackage{subcaption}
\usepackage{algorithmic}
 \usepackage[ruled,vlined]{algorithm2e}
\usepackage{amssymb}
\usepackage{stmaryrd}
\usepackage{booktabs}
\usepackage{enumitem}
\usepackage{float}
\usepackage{tabularx}
\usepackage[table]{xcolor}
\usepackage{collcell}
\usepackage{multirow} 
\usepackage{pgfmath}

\definecolor{strongred}{RGB}{255, 102, 102}
\definecolor{medred}{RGB}{255, 153, 153}
\definecolor{weakred}{RGB}{255, 204, 204}
\definecolor{weakgreen}{RGB}{204, 255, 204}
\definecolor{medgreen}{RGB}{153, 255, 153}
\definecolor{stronggreen}{RGB}{0, 220, 0}

\newcommand{\colorcell}[1]{%
  \def\temp{#1}%
  \def\dash{-}%
  \ifx\temp\dash
    -
  \else
    \ifnum#1<10
      \cellcolor{strongred}#1%
    \else\ifnum#1<20
      \cellcolor{medred}#1%
    \else\ifnum#1<30
      \cellcolor{weakred}#1%
    \else\ifnum#1<40
      \cellcolor{weakgreen}#1%
    \else\ifnum#1<50
      \cellcolor{medgreen}#1%
    \else
      \cellcolor{stronggreen}\textbf{#1}%
    \fi\fi\fi\fi\fi
  \fi
}

\newcommand{\ccp}[2]{%
  \pgfmathtruncatemacro{\mainval}{#1/2}%
  \pgfmathtruncatemacro{\parenval}{#2/2}%
  \ifnum\mainval<10
    \cellcolor{strongred}\mainval{} (\parenval)%
  \else\ifnum\mainval<20
    \cellcolor{medred}\mainval{} (\parenval)%
  \else\ifnum\mainval<30
    \cellcolor{weakred}\mainval{} (\parenval)%
  \else\ifnum\mainval<40
    \cellcolor{weakgreen}\mainval{} (\parenval)%
  \else\ifnum\mainval<50
    \cellcolor{medgreen}\mainval{} (\parenval)%
  \else
    \cellcolor{stronggreen}\textbf{\mainval{} (\parenval)}%
  \fi\fi\fi\fi\fi
}

\newcommand{\cc}[1]{\colorcell{#1}}
\newcommand{\ccd}{-}

\newcommand{\update}[2]{[\texttt{#1} \leftarrow \texttt{#2}]}

\definecolor{jsonkey}{RGB}{0,0,180}    
\definecolor{jsonstring}{RGB}{0,128,0} 
\definecolor{jsonnumber}{RGB}{200,0,0} 
\definecolor{jsonenum}{RGB}{255,140,0} 
\definecolor{jsonframe}{gray}{0.7}     
\definecolor{listinggreen}{RGB}{0,140,0}
\definecolor{listingred}{RGB}{140,0,0}

\lstdefinelanguage{jsonabbr}{
    basicstyle=\ttfamily\tiny,  
    showstringspaces=false,
    breaklines=true,
    frame=single,
    rulecolor=\color{jsonframe},
    morekeywords={
      x, y, goalx, goaly, h0x, h0y, hole0y, h1x, h1y, h2x, h2y,
      cliffHeight,cliffXMax,cliffXMin,
      DEST_x, DEST_y, PASS_x, PASS_y, RED_x, RED_y, GREEN_x, GREEN_y, YELLOW_x, YELLOW_y, BLUE_x, BLUE_y,
      count, stood, standThreshold, standVsWeakMin, isWeakDealer
    },
    keywordstyle=\color{jsonkey}\bfseries,
    literate=
      {:}{{{\color{black}{:}}}}{1}
      {,}{{{\color{black}{,}}}}{1}
  }

\definecolor{tslkw}{RGB}{180,70,0}    
\definecolor{tslop}{RGB}{200,0,0}   
\definecolor{tslpred}{RGB}{90,130,0}  
\definecolor{tslfn}{RGB}{0,150,130}   
\definecolor{tslfrm}{gray}{0.85}

\lstdefinelanguage{tsl}{
  basicstyle=\ttfamily\footnotesize,
  showstringspaces=false,
  breaklines=true,
  frame=single,
  rulecolor=\color{tslfrm},
  morekeywords={G,F,X,U,R,W,∨},
  keywordstyle=\color{tslkw}\bfseries,
  morekeywords=[2]{&&,||,!,->,<->},
  keywordstyle=[2]\color{tslop},
  morekeywords=[3]{isClick,viewProduct,addToCart,checkout,isNav,eq,lt,le,gte,lte,
  rightmost,leftmost, inside_board,has_black_below,has_black_above,has_black_left,has_black_right,surrounded_3white},
  keywordstyle=[3]\color{tslpred},
  morekeywords=[4]{moveLeft,moveRight, nearest_black_y_up,nearest_black_y_down,nearest_black_x_right,nearest_black_x_left, discovered_x_updates,discovered_y_updates,mined_tslf_specification
  },
  keywordstyle=[4]\color{tslfn},
  morecomment=[l]{//},
  commentstyle=\color{gray},
  literate={<-}{{$\leftarrow$}}1 {->}{{$\rightarrow$}}1
}

%% file: sections/1-introduction.tex
\section{Introduction}\label{sec:intro}



\emph{Specification mining} is the process of deriving logical properties from demonstrations, discovering invariants that unify observations or discriminate positive and negative examples~\cite{glenn2002mining, lemieux2015general, neider2018learning}. Manually crafting specifications is an error-prone process, with numerous examples where flawed specifications have led to serious software failures. For instance, faulty specifications in the Linux kernel have resulted in exploitable vulnerabilities~\cite{dossche2024usenix}, while specification gaps have caused critical failures in large-scale cloud systems~\cite{newcombe2015amazon}. To reduce reliance on human intuition, specification mining automates learning directly from executions.

Standard approaches to specification mining analyze execution traces, which capture system behavior over time. As a result, temporal logics are used as a natural formalism for expressing the specification.
Linear Temporal Logic (LTL) \cite{pnueli1977temporal,clarke1986automatic} and its finite-prefix interpretation LTL$_f$~\cite{giacomo2013ltlf} have long been the standard for encoding temporal properties of systems, with applications in agent-based planning~\cite{bacchus2000using,aminof2025multiagent}
and runtime verification \cite{clarke1999model}.
Recent mining tools such as Scarlet~\cite{raha2024Scarlet} and Bolt~\cite{bathie2025bolt} demonstrate the feasibility of mining LTL$_f$ specifications.

Despite their adoption, LTL and LTL$_f$ are propositional formalisms, meaning that all events are encoded as Booleans. This results in multiple challenges. For example, to express properties that depend on variable evolutions (i.e. “move $x$ to the cell above the nearest obstacle”), we can either hand-craft predicates and their interpretations (requiring manual intervention) or bit-blast the trace's data (increasing formula size and potentially introducing spurious semantic relations).
Temporal Stream Logic (TSL)~\cite{santolucito2019tsl} presents a more expressive alternative that addresses this. By separating control from data, TSL allows specifications to express properties not only about the temporal ordering of events, but also about the data transformations applied to variables defined over arbitrary types.

In this paper, we demonstrate that a temporal logic equipped with data transformations offers an expressive formalism for capturing complex specifications.
Because specification mining operates over finite traces, we introduce TSL$_f$, a finite-prefix interpretation of TSL.
TSL$_f$ improves on the expressivity of LTL$_f$ by natively capturing functional updates on variables over time. 

Our specification mining algorithm operates on sets of input traces whose variables may range over arbitrary types, not just Booleans. To infer a specification, we must first discover functions that change system variables over time, and 
 the temporal relations among those variables. 
We use Syntax-Guided Synthesis (SyGuS)~\cite{alur2013syntax} for the automatic discovery of functions from input-output examples. 
To identify a set of functions that captures how variables evolve, our procedure synthesizes a function for every variable at every time step.
We next iteratively prune this set until we find functions that minimally describe the data transformations over the entire trace.
Finally, using the learned set of transformations, we lift our data traces to function application traces.
Such traces are then amenable to standard mining algorithms such as Boolean Subset Cover~\cite{bathie2025bolt}.






We evaluate our approach on the OpenAI-Gymnasium ToyText suite~\cite{openai2016gym}, a collection of environments in which agents must reason temporally to reach a goal state. Using only positive and negative traces, and without any domain-specific knowledge, action functions, or fine-grained reward signals, our method extracts specifications that fully capture the environment objectives. 
We validate these mined specifications by synthesizing reactive controllers and deploying them as game-playing agents, achieving perfect win rates on held-out configurations using only a small number ($\leq 20$) of examples. 
Our approach significantly outperforms both symbolic and neural baselines, requiring an order of magnitude fewer examples while demonstrating better performance on generalized problems.

Overall, this work enables the mining of more expressive temporal properties than prior approaches; more broadly, it presents a step toward a purely symbolic reinforcement learning (RL) paradigm, where an agent interacts with an environment, learns from mining generated traces, and refines itself through formal specification.
In summary, our contributions are as follows:
\vspace{-0.1cm}
  \begin{itemize}
      \item We present a bottom-up synthesis algorithm that discovers function sets that cover full execution traces, with a greedy input-swapping strategy.
      \item We introduce a specification mining framework over functions and datatypes using TSL$_f$, a finite-trace semantics for Temporal Stream Logic.
      \item We show on standard RL benchmarks that from mined specifications, we can synthesize reactive programs that are 
      sample-efficient and generalizable.
  \end{itemize}

%% file: sections/2-motivating.tex
\begin{figure}[t]
    \centering
    \begin{subfigure}[h]{0.54\linewidth}
        \centering
        \includegraphics[width=\linewidth]{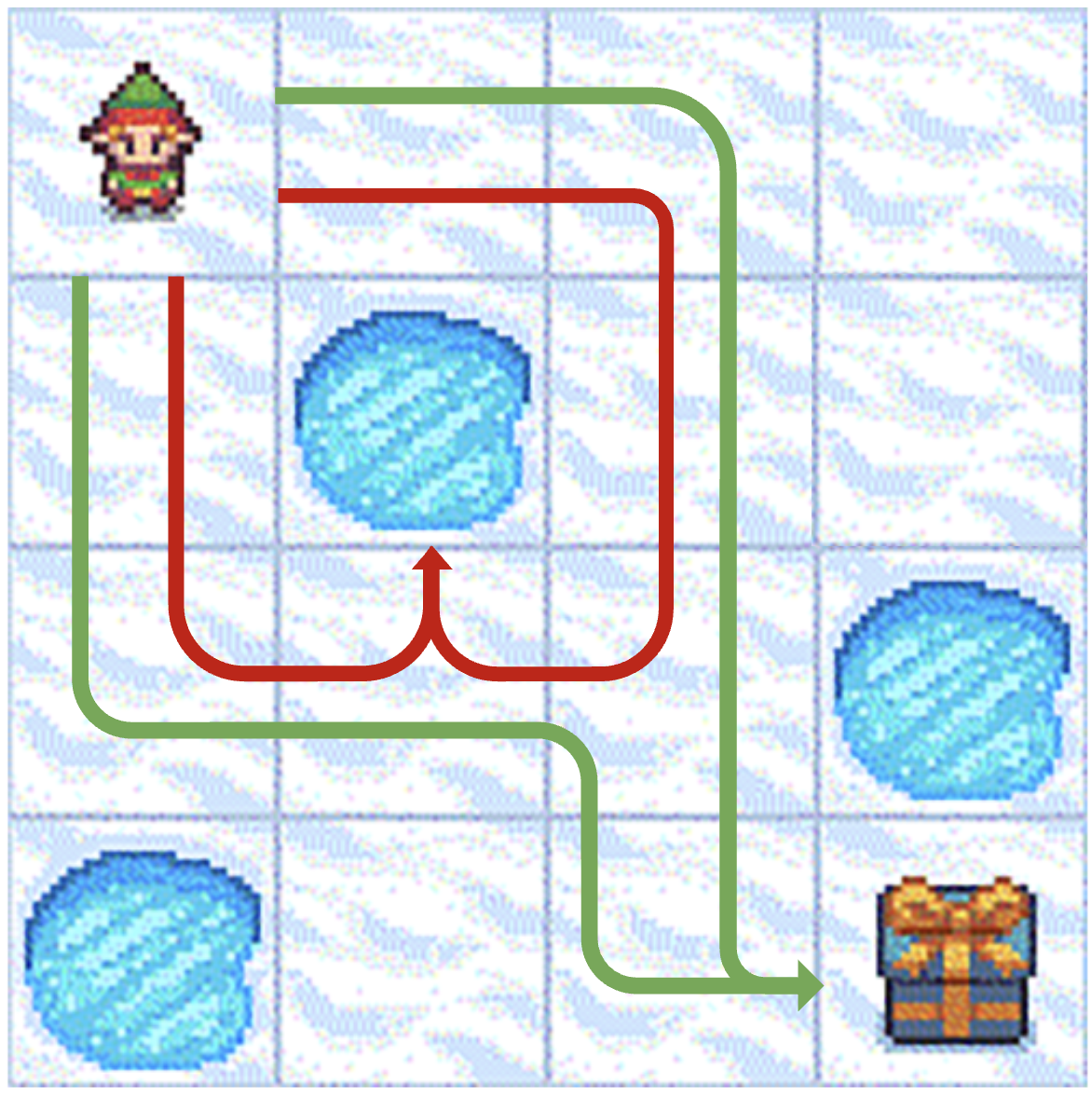}
    \end{subfigure}
    \hfill
    \begin{subfigure}[h]{0.445\linewidth}
        \centering
    \begin{tabular}{c}
         \begin{lstlisting}[
    language=jsonabbr,
    float=false,
    basicstyle=\ttfamily\tiny,
    aboveskip=0pt,
    belowskip=0pt,
    frame=single,
    rulecolor=\color{listinggreen}
]
{p:(0,0), g:(3,3), h0:(1,1), h1:(0,3), h2:(3,2)}
{p:(0,1), g:(3,3), h0:(1,1), h1:(0,3), h2:(3,2)}
{p:(0,2), g:(3,3), h0:(1,1), h1:(0,3), h2:(3,2)}
{p:(1,2), g:(3,3), h0:(1,1), h1:(0,3), h2:(3,2)}
{p:(2,2), g:(3,3), h0:(1,1), h1:(0,3), h2:(3,2)}
{p:(2,3), g:(3,3), h0:(1,1), h1:(0,3), h2:(3,2)}
{p:(3,3), g:(3,3), h0:(1,1), h1:(0,3), h2:(3,2)}
\end{lstlisting} \\

         \begin{lstlisting}[
    language=jsonabbr,
    float=false,
    basicstyle=\ttfamily\tiny,
    aboveskip=0pt,
    belowskip=0pt,
    frame=single,
    rulecolor=\color{listinggreen}
]
{p:(0,0), g:(3,3), h0:(1,1), h1:(0,3), h2:(3,2)}
{p:(1,0), g:(3,3), h0:(1,1), h1:(0,3), h2:(3,2)}
{p:(2,0), g:(3,3), h0:(1,1), h1:(0,3), h2:(3,2)}
{p:(2,1), g:(3,3), h0:(1,1), h1:(0,3), h2:(3,2)}
{p:(2,2), g:(3,3), h0:(1,1), h1:(0,3), h2:(3,2)}
{p:(2,3), g:(3,3), h0:(1,1), h1:(0,3), h2:(3,2)}
{p:(3,3), g:(3,3), h0:(1,1), h1:(0,3), h2:(3,2)}
\end{lstlisting} \\
         \begin{lstlisting}[
    language=jsonabbr,
    float=false,
    basicstyle=\ttfamily\tiny,
    aboveskip=0pt,
    belowskip=0pt,
    frame=single,
    rulecolor=\color{listingred}
]
{p:(0,0), g:(3,3), h0:(1,1), h1:(0,3), h2:(3,2)}
{p:(1,0), g:(3,3), h0:(1,1), h1:(0,3), h2:(3,2)}
{p:(2,0), g:(3,3), h0:(1,1), h1:(0,3), h2:(3,2)}
{p:(2,1), g:(3,3), h0:(1,1), h1:(0,3), h2:(3,2)}
{p:(2,2), g:(3,3), h0:(1,1), h1:(0,3), h2:(3,2)}
{p:(1,2), g:(3,3), h0:(1,1), h1:(0,3), h2:(3,2)}
{p:(1,1), g:(3,3), h0:(1,1), h1:(0,3), h2:(3,2)}
\end{lstlisting}
 \\
         \begin{lstlisting}[
    language=jsonabbr,
    float=false,
    basicstyle=\ttfamily\tiny,
    aboveskip=0pt,
    belowskip=0pt,
    frame=single,
    rulecolor=\color{listingred}
]
{p:(0,0), g:(3,3), h0:(1,1), h1:(0,3), h2:(3,2)}
{p:(0,1), g:(3,3), h0:(1,1), h1:(0,3), h2:(3,2)}
{p:(0,2), g:(3,3), h0:(1,1), h1:(0,3), h2:(3,2)}
{p:(1,2), g:(3,3), h0:(1,1), h1:(0,3), h2:(3,2)}
{p:(1,1), g:(3,3), h0:(1,1), h1:(0,3), h2:(3,2)}
\end{lstlisting}
    \end{tabular}
        
    \end{subfigure}

    \caption{\textsc{FrozenLake}~\cite{openai2016gym}. Two positive (green) and two negative (red) traces.}
    \label{fig:frozenlake-traces}
    \vspace{-0.3cm}
\end{figure}

\section{Motivating Example}\label{sec:motiv}



Modern assessments of AI systems emphasize out-of-distribution generalization, i.e. the ability to apply learned concepts to novel situations not encountered during training. The ARC-AGI benchmarks have become a prominent testbed for this capability, presenting tasks on discrete grids where success requires inferring latent rules from sparse examples \cite{chollet2019arc}. While humans achieve 80–90\% accuracy on these tasks, state-of-the-art large language models plateau far below, only reaching 31\% \cite{arcagi2025report}. The recently introduced ARC-AGI3 extends this challenge to ``interactive reasoning benchmarks"~\cite{kamradt2025arcagi3}: grid-based games 
with actions modifying state over time and success dependent on inferring reactive rules. Such benchmarks fall into a class of temporal and relational reasoning problems. 

We illustrate our approach on \textsc{FrozenLake}, a canonical grid-world task from the OpenAI-Gymnasium ToyText suite~\cite{openai2016gym} also in this temporal and relational class. As shown in \cref{fig:frozenlake-traces}, an agent begins at a start cell and must reach a goal while avoiding holes on the surface. 
Suppose we have two positive traces (successfully reaching the goal) and two negative traces (ending in a hole). Each trace consists of tuples recording the 
values of five variables {\texttt{(p, g, h0, h1, h2)}}. Here, \texttt{p} refers to the player, \texttt{g} to the goal, and \texttt{h*} to the holes. We are only given the traces, with no domain-specific functions or predicates.  

Our pipeline first uncovers the functions that exist in the game: the player's coordinates either increment \texttt{(+1)} or decrement \texttt{(-1)}. Our mining algorithm then derives both a liveness specification, capturing what must eventually happen, and a safety specification, stating what must always hold. 
We recover the liveness formula,
which expresses that the agent must eventually reach the goal. However, from these four traces, the mined safety specification states that the agent may not move upwards.
(Note: {\texttt{p[2]}} is the second tuple value of {\texttt{p}}).
$$
  \varphi_{\text{live}} = \mathbf{F} (\texttt{(=)}\ \texttt{p}\ \texttt{g}), \ \ 
  \varphi_{\text{safe}}^{(1)} = \mathbf{G} \neg  (\update{p[2]}{(-1) p[2]})
$$
This safety specification is clearly  spurious, yet it still distinguishes positive from negative traces in the given sample. If we synthesize a reactive controller from $\varphi_{\text{live}} \land \varphi_{\text{safe}}^{(1)}$ and deploy it, the agent indeed proceeds straight downwards and falls into a hole. 
This fall generates a new negative trace. Adding it to our corpus and re-mining, we obtain the same liveness formula but a refined safety specification:
$$
\varphi_{\text{live}} = \mathbf{F} (\texttt{(=)}\ \texttt{p}\ \texttt{g}),  \ \varphi_{\text{safe}}^{(2)}  = \mathbf{G} \neg (\texttt{(=)}\ \texttt{p}\ \texttt{h0} \vee \texttt{(=)}\ \texttt{p}\ \texttt{h1})
$$
This specification captures the nature of the constraint (avoid holes) but only identifies failures at \texttt{hole0} and \texttt{hole1}. Synthesizing a controller from this specification results in an agent avoiding these two holes but falling into \texttt{hole2}, which was never visited. Adding this negative sample, we extract a correct specification
\begin{align*}
   \varphi_{\text{live}}  = \mathbf{F} (\texttt{(=)}\ \texttt{p}\ \texttt{g}), \ \  
  \varphi_{\text{safe}}^{(3)}  = \mathbf{G} \neg (\texttt{(=)}\ \texttt{p}\ \texttt{h0} \vee \texttt{(=)}\ \texttt{p}\ \texttt{h1} \ \vee \ \texttt{(=)}\ \texttt{p}\ \texttt{h2})
\end{align*}

The synthesized controller now wins consistently. Importantly, because the specification is relationally expressed (comparing player coordinates to hole coordinates rather than memorizing concrete positions), it generalizes to boards with different hole placements and different grid sizes never seen during mining. 

In essence, this demonstrates a form of \emph{symbolic reinforcement learning}. 
Rather than fitting a policy through gradient descent, we synthesize a formal world model through specification mining over traces. The result is an agent whose behavior is constrained by discovered temporal and relational invariants. 

%% file: sections/3-formalisms.tex
\vspace{-0.4cm}
\section{Formalisms}\label{sec:formalism}

We introduce TSL$_f$, the finite variant of TSL.
In order to learn temporal specifications over data transformations, we require a logic that natively supports temporal operators and reasoning over theories. There has been an increasing interest in such logics recently, including for example, LTL$_f$-MT~\cite{rodriguez2023boolean,rodriguez2024adaptive,rodriguez2025counter} or Reactive
Program LTL~\cite{heim-popl25, heim2025issy}.
We also focus on TSL$_f$ in particular because it is well supported from a tool perspective for the construction of reactive programs. 

\vspace{-0.4cm}
\subsection{Syntax}
Under a finite-prefix interpretation of behaviors, the operators of TSL$_f$ do not vary from their infinite counterparts; however, their semantics do. 

TSL$_f$ is parameterized by a signature $\Sigma = (S, F, P, \Delta)$ consisting of:
\begin{itemize}
  \item A finite set $S$ of typed \emph{stream variables}, which represent data evolution.
  \item A finite set $F$ of \emph{function symbols}, each with a type signature $\tau_1 \times \cdots \times \tau_n \to \tau$.
  \item A finite set $P$ of \emph{predicate symbols}, with a type signature $\tau_1 \times \cdots \times \tau_n \to \mathtt{Bool}$.
  \item A background theory $\Delta$ over which functions and predicates are interpreted.
\end{itemize}

We follow the TSL-MT framework from \citep{santolucito2022tslmt} for interpretation and application of theories. Our experiments consider $\Delta$ to be the theory of Linear Integer Arithmetic (LIA), but other theories, such as the
the theory of arrays, are also applicable. Function and predicate symbols are interpreted as concrete implementations over this theory, which we discover with our mining procedure.

 In TSL$_f$, we introduce a distinguished atomic proposition $\mathtt{END} \in \mathcal{P}$ that is true exclusively at the final position of a trace. This is dual to the \texttt{Alive} predicate used in translations from LTL$_f$ to standard LTL over infinite traces~\cite{giacomo2013ltlf}. 

\vspace{0.5em}
\noindent\textbf{Terms.} We define two classes of terms over signature $\Sigma$:
\begin{itemize}
  \item \emph{Update terms} ($\mathcal{T}_u$) $[s \gets f(s_1, \ldots, s_n)]$ where $s, s_1, \ldots, s_n \in S$ and $f \in F$ is a function whose type signature matches the types of $s_1, \ldots, s_n$ and returns a value of the same type as $s$. $[s \gets f(s_1, \ldots, s_n)]$ states that $s$ is updated in the \emph{next} time step by applying function $f$ to the current values of $s_1, \ldots, s_n$.

\item \emph{Predicate terms} ($\mathcal{T}_p$) $p(s_1, \ldots, s_n)$ where $p \in P$ and $s_1, \ldots, s_n \in S$ are streams whose types match the signature of $p$. It states that a predicate is true over its terms at the \emph{current} timestep. Predicate terms include $\mathtt{END}$.
\end{itemize}

\noindent\textbf{Formulas.} TSL$_f$ formulas are defined by the following grammar:
\begin{equation}
  \varphi ::= \tau \in \mathcal{T}_u \cup \mathcal{T}_p \ | \ \neg \varphi \ | \ \varphi \land \varphi \ | \ \mathbf{X}  \varphi \ | \ \varphi  \mathbf{U}  \varphi
  \label{eq:tsl}
\end{equation}
 where $\mathbf{X}$ (neXt) and $\mathbf{U}$ (Until) are the standard LTL$_f$ operators.
We note standard abbreviations commonly used:
\begin{itemize}
  \item Boolean connectives: $\varphi_1 \vee \varphi_2 \equiv \neg(\neg\varphi_1 \land \neg\varphi_2)$ and $\varphi_1 \to \varphi_2 \equiv \neg\varphi_1 \vee \varphi_2$
  \item Temporal operators: $\mathbf{F}  \varphi \equiv \top  \mathbf{U}  \varphi$ (eventually) and $\mathbf{G}  \varphi \equiv \neg \mathbf{F} \neg \varphi$ (always)
\end{itemize}

\subsection{Semantics}

We interpret TSL$_f$ formulas over finite traces. A \emph{trace}  a finite sequence $\sigma = \sigma_0 \sigma_1 \cdots \sigma_n$ where each $\sigma_i \subseteq \mathcal{T}_u \cup \mathcal{T}_p$ is a set of terms that hold at position $i$. 

\noindent\textbf{Well-Formed Trace.} A trace $\sigma$ is \emph{well-formed} with respect to updates in TSL$_f$ if it satisfies the following constraints:
\begin{enumerate}
  \item For each position $i < |\sigma| - 1$ (i.e., every non-final position), every stream variable is uniquely updated. This follows from the mutual exclusion of updates, which intuitively states that a variable cannot be set to two different terms at the same timestep~\citep{santolucito2019frp}:
  $$\forall i \in [0,|\sigma|-1) \ \forall s \in S \   \exists! \   u_s \in \mathcal{T}_u. \   u_s \in \sigma_i$$
\item At the final position $|\sigma| - 1$, no updates hold. Intuitively, there is no clear determination of a stream variable to a next timestep that is undefined:
$$\forall  u \in \mathcal{T}_u. \   u \notin \sigma_{|\sigma|-1}$$
\item The distinguished predicate $\mathtt{END}$ holds exclusively at the final position. 
$$\mathtt{END} \in \sigma_i \iff i = |\sigma| - 1$$
\end{enumerate}


\noindent\textbf{Satisfaction.} The satisfaction relation $\sigma, i \models \ \varphi$ (``formula $\varphi$ holds at position $i$ of trace $\sigma$'') is defined recursively. Trace $\sigma$ \emph{satisfies} formula $\varphi$ if $\sigma, 0 \models \ \varphi$:
  \begin{align*}
      \sigma, i \models \ & [s \gets f(s_1, \ldots, s_n)] \iff [s \gets f(s_1, \ldots, s_n)] \in \sigma_i \\
      \sigma, i \models \ & p(s_1, \ldots, s_n) \iff p(s_1, \ldots, s_n) \in \sigma_i \\ 
      \sigma, i \models \ & \mathtt{END} \iff i = |\sigma| - 1 \\
      \sigma, i \models \ & \neg \varphi \iff \sigma, i \not\models \ \varphi \\ 
      \sigma, i \models \ & \varphi_1 \land \varphi_2 \iff \sigma, i \models \ \varphi_1 \text{ and } \sigma, i \models \ \varphi_2 \\
      \sigma, i \models \ & \mathbf{X}  \varphi \iff i + 1 < |\sigma| \text{ and } \sigma, i+1 \models \ \varphi \\
      \sigma, i \models \ & \varphi_1 \, \mathbf{U} \, \varphi_2 \iff \exists j \geq i. \, \big(j < |\sigma| \, \land \, \sigma, j \models \varphi_2 \, \land \, \forall k. \, (i \leq k < j \to \sigma, k \models \varphi_1)\big)
  \end{align*}

\subsection{Realizability}

In the context of reactive synthesis, a specification $\varphi$ is \emph{realizable} if there exist functions computing output updates from input/output history such that $\varphi$ is satisfied for all possible input behaviors. For TSL, this manifests by partitioning streams into \emph{input streams} $I \subseteq S$ (controlled by the environment) and \emph{output streams} $O = S \setminus I$ (controlled by the system) \citep{santolucito2019tsl}.
For TSL$_f$, the realizability problem is similarly defined, but over finite-horizon strategies. The system must synthesize update functions that guarantee satisfaction of $\varphi$ for any finite input.

The realizability problem of TSL$_f$, similarly to its infinite counterpart, is undecidable. In \cite{santolucito2019tsl}, the Post Correspondence Problem (PCP) is reduced to a TSL specification, invariably constructing an undecidable realizability instance. This proof holds under TSL$_f$ semantics, as TSL$_f$ is defined under a finite but arbitrary prefix interpretation. A strictly bounded formulation, with an explicit maximal trace length, is decidable, but this is outside the scope of this paper. 


While TSL$_f$ inherits the grammar TSL, the finite-trace interpretation imposes fundamental limitations on realizability. Specifically, TSL$_f$ is strictly less expressive than TSL with respect to the class of realizable specifications.

\begin{theorem}
    TSL$_f$ is strictly less expressive than TSL
\end{theorem}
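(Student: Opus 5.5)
Since the preceding paragraph frames expressiveness in terms of the class of realizable specifications, I will prove the statement in that sense. The notion I aim to show is: every behaviour a TSL$_f$ specification can enforce on all finite prefixes can also be enforced by a TSL specification over infinite traces, while the converse fails. Strictness then needs two parts. The first is an embedding showing TSL is at least as expressive. The second is a separating specification whose TSL reading is realizable but which cannot be captured under TSL$_f$ semantics.

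\textbf{Embedding.} I will use the standard translation of LTL$_f$ into LTL via the \texttt{Alive} proposition~\cite{giacomo2013ltlf}. Here $\mathtt{END}$ plays the dual role: introduce a fresh output stream marking liveness, with $\mathtt{END}$ true exactly at the last alive position. Define $t(\varphi)$ inductively as follows: atoms are unchanged, $t(\mathbf{X}\varphi)=\mathbf{X}(\neg\mathtt{END}' \wedge t(\varphi))$, and $t(\varphi_1\mathbf{U}\varphi_2)=t(\varphi_1)\,\mathbf{U}\,(\mathrm{alive}\wedge t(\varphi_2))$. Conjoin the well-formedness constraints (unique updates before $\mathtt{END}$, none after, $\mathtt{END}$ occurring exactly once). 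By induction on $\varphi$, a finite well-formed trace satisfies $\varphi$ iff every infinite extension that is dead after $\mathtt{END}$ satisfies $t(\varphi)$. A TSL$_f$ strategy therefore lifts to a TSL strategy for $t(\varphi)$, and conversely by restriction. Hence TSL$_f$-realizable specifications embed into TSL-realizable ones.

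\textbf{Separation.} I will take $\psi=\mathbf{G}\mathbf{F}\,p(i)$ for an input stream $i$, or more cleanly $\mathbf{G}\mathbf{F}[o\gets f(o)]$ combined with a fairness-style condition. The simplest choice is $\psi=\mathbf{G}\mathbf{F}[o\gets f(o)] \wedge \mathbf{G}\mathbf{F}[o\gets g(o)]$, which a TSL system realizes by alternating the two updates forever. Under TSL$_f$ semantics, $\mathbf{G}\mathbf{F}\chi$ at the last position forces $\chi$ to hold at the final position. But update terms never hold at the final position, by well-formedness condition 2. So $\psi$ is unsatisfiable, hence unrealizable, under TSL$_f$.

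\textbf{Main obstacle.} It is not enough to show that the same formula differs. I must argue that no TSL$_f$ formula whatsoever captures the infinite-horizon behaviour of $\psi$. The plan is a prefix argument: every TSL$_f$ formula's meaning is a set of finite words. Any TSL$_f$ specification is thus determined by finite prefixes, which describe at best safety/co-safety-style combinations over finite horizons. Meanwhile $\psi$ defines a pure liveness, recurrence property whose set of satisfying infinite words is not determined by any set of finite prefixes. Every finite prefix extends both to a model and to a non-model of $\psi$. Concretely, I would assume some $\varphi$ captures $\psi$ and take a TSL model of $\psi$. Finite prefixes accepted by $\varphi$ must appear along it, and I would splice such a prefix with a suffix that eventually stops using $g$. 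This yields a behaviour accepted by $\varphi$ but violating $\psi$, a contradiction. Making precise what ``captures'' means across finite and infinite semantics is where I expect the difficulty. I would fix it as ``$\varphi$ is realized by exactly the strategies whose every play prefix satisfies it.''
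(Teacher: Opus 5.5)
Your separating specification is the paper's own. The paper uses $\mathbf{G}(\mathbf{F}[x\gets y]\wedge\mathbf{F}[x\gets z])$, which is your $\psi$ after distributing $\mathbf{G}$; it is realized in TSL by alternation and refuted in TSL$_f$ by evaluating $\mathbf{F}$ at the last position. Your appeal to well-formedness condition~2 is the accurate justification. The paper's main text blames mutual exclusion, but condition~1 only constrains non-final positions, and what the final position actually violates is condition~2. That is also why a single conjunct $\mathbf{G}\mathbf{F}[o\gets f(o)]$ would already suffice.

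The routes differ around this example. The paper reads expressiveness as ``which formulas are realizable,'' declares TSL$_f \subseteq$ TSL immediate from the shared syntax, and stops at the same-formula separation. Your translation is the better choice, because under the literal same-formula reading the inclusion is false. For instance, $\mathbf{F}\neg\mathbf{X}\top$ holds on every finite trace, since the strong $\mathbf{X}$ fails at the last position, yet it is unsatisfiable over infinite traces. Once inclusion goes through a translation, though, one formula that behaves differently no longer proves strictness. That is exactly the obstacle you identify, and the paper never confronts it. So relative to the paper nothing is missing; the issues lie only in your additional steps.

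\textbf{Three repairs to your additional steps.}

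First, update atoms cannot be translated unchanged. A TSL computation assigns exactly one update per output at every step, so the $\mathtt{END}$ position and the dead suffix carry updates the finite trace lacks. Already $\neg\mathbf{F}u$ breaks your ``iff every extension'' induction. Use $t(u)=u\wedge\neg\mathtt{END}$, replace ``no updates after $\mathtt{END}$'' by, say, only self-updates $[o\gets o]$, and take $t(\mathbf{X}\varphi)=\neg\mathtt{END}\wedge\mathbf{X}\,t(\varphi)$.

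Second, your two halves use different notions of TSL$_f$ realizability. The embedding makes $\mathtt{END}$ a system output, so the system picks the horizon. Your definition of ``captures'' instead quantifies over every play prefix, i.e.\ every horizon. Under the prefix notion, ``conversely by restriction'' fails, and $t(\varphi)$ is not the right TSL counterpart; you would need a formula saying that every prefix satisfies $\varphi$. Fix one notion for both halves.

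Third, the splice as stated does not show that the spliced behaviour is accepted, since its prefixes $uf^k$ are not prefixes of the model you started from. What you need is density. Every well-formed finite word $u$ is a play prefix of a strategy realizing $\psi$ (play $u$, then alternate). Hence a capturing $\varphi$ holds on all finite words, so it is also realized by the strategy that applies only $f$, which violates $\psi$.

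Under the $\mathtt{END}$-as-output notion the separation is even simpler. Every model of $t(\varphi)$ together with the well-formedness constraints performs only self-updates after $\mathtt{END}$, while every model of $\psi$ performs $f$ and $g$ infinitely often.
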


This gap is also present for LTL$_f$ and LTL~\cite{giacomo2013ltlf}.
The distinction arises from the interaction between the semantics of temporal operators over finite traces and the update mutual exclusion constraint. Consider the following TSL specification:
$$
\mathbf{G} \ (\ \mathbf{F} \ [x \leftarrow y] \ \land \ \mathbf{F} \  [x \leftarrow z] \ ).
$$

Under standard TSL semantics, this is realizable as a controller that alternates between $[x \leftarrow y]$ and $[x \leftarrow z]$ indefinitely. However, under a finite-prefix interpretation, $\mathbf{G} ( \mathbf{F} [x \leftarrow y])$ necessitates that at the last timestep, $[x \leftarrow y]$ is true, as from every step in the finite trace there must eventually be a later step where the update holds. Similarly, $\mathbf{G} ( \mathbf{F} [x \leftarrow z])$ requires $[x \leftarrow z]$ be true at the last timestep. This presents a violation of well-formed semantics, specifically the mutual exclusion of updates, which dictates that at a given timestep, a stream variable can only be updated once. The formal proof is presented in \Cref{sec:thm1}. 





%% file: sections/4-mining.tex
\begin{figure}[t]
    \centering
    \begin{lstlisting}[
    language=jsonabbr,
    float=false,
    basicstyle=\ttfamily\scriptsize,
    aboveskip=0pt,
    belowskip=0pt
]
{x: 0, y: 0, gX: 3, gY: 3, h0X: 1, h0Y: 1, h1X: 3, h1Y: 1, h2X: 3, h2Y: 2} 
{x: 0, y: 1, gX: 3, gY: 3, h0X: 1, h0Y: 1, h1X: 3, h1Y: 1, h2X: 3, h2Y: 2} 
{x: 0, y: 2, gX: 3, gY: 3, h0X: 1, h0Y: 1, h1X: 3, h1Y: 1, h2X: 3, h2Y: 2} 
{x: 1, y: 2, gX: 3, gY: 3, h0X: 1, h0Y: 1, h1X: 3, h1Y: 1, h2X: 3, h2Y: 2} 
{x: 2, y: 2, gX: 3, gY: 3, h0X: 1, h0Y: 1, h1X: 3, h1Y: 1, h2X: 3, h2Y: 2} 
{x: 2, y: 3, gX: 3, gY: 3, h0X: 1, h0Y: 1, h1X: 3, h1Y: 1, h2X: 3, h2Y: 2} 
{x: 3, y: 3, gX: 3, gY: 3, h0X: 1, h0Y: 1, h1X: 3, h1Y: 1, h2X: 3, h2Y: 2}
\end{lstlisting}
    \caption{OpenAI-Gymnasium \textsc{FrozenLake} game trace. Each line represents a timestep and encodes environment state. \texttt{x} and \texttt{y} refer to player coordinates, \texttt{gX} and \texttt{gY} to the goal coordinates, and \texttt{h*X} and \texttt{h*Y} to obstacle coordinates.
    }
    \label{fig:frozenlake}
    \vspace{-0.4cm}
\end{figure}

\section{Mining Procedure}\label{sec:mining}
Specification mining is the task of inferring a specification $\phi$ from demonstrations, such that positive traces satisfy $\phi$ and negative traces violate it. LTL$_f$ specification mining operates on sets of Boolean traces of the form 
$\alpha_1\land \alpha_2;\ \alpha_2 \land \alpha_3;\ \alpha_1\dots,
$
where $\alpha_i$ denotes an atomic proposition and semicolons separate timesteps.
This is limited in practice because real systems involve complex functions, predicates, and temporal interactions defined over non-Boolean data types.
Consider, for example, a log of execution from the OpenAI-Gymnasium ToyText \textsc{FrozenLake} Game (\cref{fig:frozenlake}).
Here, before even attempting to mine system temporal behaviors, we need to establish how data values update over time. Specifically, we first need to posit what functions with which inputs have been applied to different variables, before then being able to extract temporal dependencies of these behaviors.

\subsection{Function Discovery}

Given a set of traces, the function discovery problem finds a set of functions $F$ that explains variable evolutions over the entire set. 
This is a prerequisite for data-rich specification mining: without knowing which functions explain the data transformations, we cannot construct update terms defining our variables.
For every time step but the last one, every variable $v\in S$ in our environment has an update term $[v \leftarrow f  \vec{w}], \ f \in F, \vec{w} \in S^{|\vec{w}|}$ at every non-terminal timestep. 

\begin{figure}[h]
    \centering
    \includegraphics[width=\linewidth]{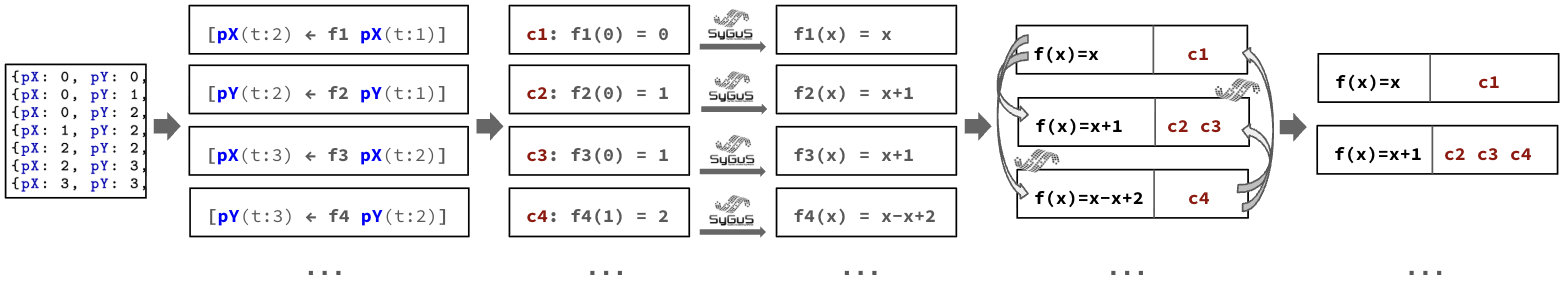}
    \caption{Function discovery procedure. After choosing input-output pairs, the SyGuS solver determines unique functions for each constraint. Constraints are merged by their synthesized functions, at which point smaller constraint sets attempt to merge into larger ones, eliminating spurious functions.}
    \label{fig:func-discovery}
    \vspace{-0.3cm}
\end{figure}

We present a \emph{greedy bottom-up} function discovery procedure. 
We begin with the singleton partition of our traces, treating each variable transition as a constraint with its own function. We then iteratively merge groups that can be unified under one function, exploring alternative function inputs if a merge is unsuccessful. The algorithm terminates when no further merges succeed, yielding a complete (with respect to the theory grammar) and locally minimal function set that covers all variables at all time steps. This greedy strategy avoids exhaustive enumeration while still discovering compact function sets.

\begin{algorithm}[t]
  \caption{SynthesizeCoveringFunctions}
  \label{alg:main-synthesis}
  \KwIn{Logs $\Pi$, Variables $S$, $k_{\max}$ (max arity)}
  \KwOut{Function set $F$, success flag $s$}
  \ForEach{$\pi \in \Pi$}{
    \tcp{Phase 1: Extract Transitions for $\pi$}
    $T_\pi \gets S \times [1,|\pi|)$\;
    \ForEach{$(v,t) \in T_\pi$}{
        $\texttt{alts}_\pi[(v,t)] \gets \{( \texttt{val}^\pi_{t-1}(\vec{w}), \texttt{val}^\pi_{t}(v)\}) \mid \vec{w} \in S^{\ell}, \, \ell \in [1, k_{\max}]\}$\;
        $\texttt{rec}_\pi[(v,t)] \gets \texttt{alts}_\pi[(v,t)][0]$\;
    }
    \tcp{Phase 2: Singleton Synthesis}
    \ForEach{$(v,t) \in T_\pi$}{
        $f_\pi[(v,t)] \gets \texttt{SyGuS}(\texttt{rec}_\pi[(v,t)])$\;
    }
    \tcp{Phase 3: Group and Merge}
    $G_\pi = \{\{(v,t) \in T_\pi \mid f_\pi[(v,t)] = f \} \mid f \in \text{Image}(f_\pi)\}$\;
    $F_\pi \gets \texttt{BottomUpMerge}(G_\pi, \texttt{rec}_\pi, \texttt{alt}_\pi, m_{\max})$\; 
  }
  \Return{$\bigcup_{\pi \in \Pi} F_\pi$}
  \end{algorithm}

The core primitive underlying our algorithm is \emph{Syntax-Guided Synthesis} (SyGuS)~\citep{alur2013syntax}. A SyGuS solver (we use CVC5~\citep{barbosa2022cvc5}) takes as input a set of constraints $\{f(x_1) = y_1, \ldots, f(x_k) = y_k\}$ and a grammar defining the space of candidate functions, and constructs a function $f$ satisfying all constraints. 
For instance, given examples $\{(0,1), (1,2), (2,3)\}$, CVC5 synthesizes $f(x) = x + 1$. 
If the SyGuS solver fails to find a function satisfying the input-output constraints, we suppose no such function exists, and we must choose a different set of constraints that cover the trace. Empirically, we find 0.1 seconds to be a sufficient timeout. 

Algorithm~\ref{alg:main-synthesis} presents our bottom-up synthesis procedure. The algorithm takes as input a set of traces, a maximum function arity, and a maximum number of functions, and outputs a set of discovered functions. The algorithm proceeds in three phases: (1) initializing candidate variable transitions and generating input groupings, (2) singleton synthesis to find independent functions explaining each chosen transition, and (3) grouping transition constraint sets.
In the first phase, the algorithm establishes, for every time step, the possible inputs (variables at the previous timestep) for a given variable. 
Choosing a configuration, we independently synthesize unique functions for each transition.

\begin{algorithm}[h]
  \caption{BottomUpMerge}
  \label{alg:merge}
  \KwIn{Groups $G$, records $\texttt{rec}$, alternatives $\texttt{alts}$}
  \KwOut{Function set $F$}
  $G \gets \texttt{sort}(G, \lambda x.y.|x|<|y|)$ \tcp*{sort $G$ by smallest to largest sets}
  \Repeat{$\neg \textit{merged}$}{
    $\textit{merged} \gets \texttt{false}$\;
    \ForEach{$g_i \in G$}{
        \tcp{Try merging with current input assignments}
        \ForEach{$g_j \in G$ where $|g_j| > |g_i|$}{
            $R \gets \{\texttt{rec}[(v,t)] \mid (v,t) \in g_i \cup g_j\}$\;
            \If{$\texttt{SyGuS}(R) \neq \texttt{null}$}{
                $g_j \gets g_j \cup g_i$; $g_i \gets \emptyset$\;
                $\textit{merged} \gets \texttt{true}$; \textbf{break}\;
            }
        }
        \tcp{If merge failed, try alternative input sources}
        \If{$g_i \neq \emptyset$}{
            \ForEach{$(v,t) \in g_i$}{
                \ForEach{$r \in \texttt{alts}[(v,t)]$}{
                    $\texttt{rec}[(v,t)] \gets r$ \tcp*{Swap to alternative}
                    \ForEach{$g_j \in G$ where $|g_j| > |g_i|$}{
                        $R \gets \{\texttt{rec}[(v,t)] \mid (v,t) \in g_i \cup g_j\}$\;
                        \If{$\texttt{SyGuS}(R) \neq \texttt{null}$}{
                            $g_j \gets g_j \cup g_i$; $g_i \gets \emptyset$\;
                            $\textit{merged} \gets \texttt{true}$; \textbf{break}; \tcp*{break to repeat}
                        }
                    }
                }
            }
        }
    }
  }
  \Return{$\{f_i \mid g_i \in G, g_i \neq \emptyset\}$\;}
  \end{algorithm}

Algorithm~\ref{alg:merge} implements the merging logic. Starting from the singleton partition, the algorithm greedily attempts to unify smaller groups with larger ones by checking whether a single function can explain both sets of constraints. For example, for a given constraint \texttt{f(1)=2}, CVC5 sometimes discovers a spurious function such as $f(x)=(x-x+2)$. Combining this input-output pair with other \texttt{(+1)} pairs weeds out the spurious case. 
We prioritize merging smaller groups first as these are more often spurious relations. When a direct merge fails with the current input source assignments, the algorithm explores alternative input variables for transitions in the smaller group, stipulating that the variable was updated as a result of some other input. This corresponds to trying different hypotheses about which variables are the inputs to the output variable at the time step. 
This process repeats until no further merges are found for the current set of group configurations, yielding a locally minimal function set.

The algorithm is guaranteed to find a function set explaining all observed transitions. In the worst case, if no merges succeed, the algorithm returns one function per transition (the singleton partition). The greedy merging strategy does not guarantee global minimality; it is possible that a different input source assignment or merge order could yield fewer functions, but the bottom-up merging procedure of Algorithm~\ref{alg:merge} tractably tends toward compact function sets.


  \begin{figure}[t]
    \centering
    \includegraphics[width=0.85\linewidth]{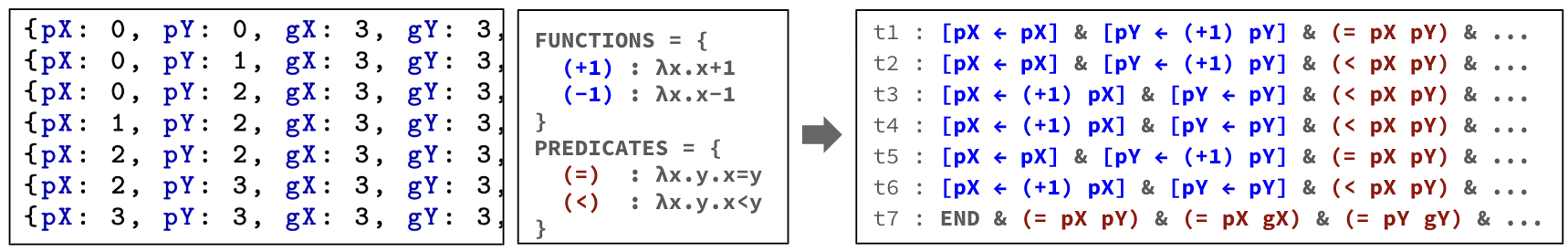}
    \caption{Constructing well-formed TSL$_f$ traces. Having learned the functions of our system, we model a valid interpretation of system behavior. For the trace to be well-formed,  each variable must only have one update (until the last timestep).}
    \label{fig:log2trace}
    \vspace{-0.2cm}
\end{figure}

\begin{algorithm}[t]
  \caption{ConstructTraces}
  \label{alg:trace-construction}
  \KwIn{Logs $\Pi$, Variables $S$, Functions $F$, Predicates $P$, Rankings $\texttt{rank}$}
  \KwOut{TSL$_f$ traces $\{\sigma_1, \ldots, \sigma_m\}$}
  result $\gets [\ ]$\;
  \ForEach{$\pi \in \Pi$}{
    $\sigma \gets [\ ]$\;
    \ForEach{$i \in [0,|\pi|-1)$}{
        $\sigma_i \gets \{\}$\;
        \ForEach{$v \in S$}{
            \ForEach{$(f, \vec{w}) \in \texttt{rank}[v]$}{
                \If{$f(\texttt{val}^\pi_{i}(\vec{w})) = \texttt{val}^\pi_{i+1}(v)$}{
                    $\sigma_i \gets \sigma_i \cup \{\update{v}{f(w)}\}$; \textbf{break}\;
                }
            }
        }
        \ForEach{$p \in P$}{
            \lIf{$p(\texttt{val}^\pi_{i}(\vec{w}))$}{$\sigma_i \gets \sigma_i \cup \{\texttt{p(w)}\}$}
        }
    }
    $\sigma_{|\pi|-1} \gets \{\mathtt{END}\}$ \tcp*{only predicates and \texttt{END} at last timestep}
    \ForEach{$p \in P$}{
        \lIf{$p(\texttt{val}^\pi_{|\pi|-1}(\vec{w}))$}{$\sigma_{|\pi|-1} \gets \sigma_{|\pi|-1} \cup \{\texttt{p(w)}\}$}
    }
    result $\gets \sigma ::$ result\;
  }
  \Return{result}
  \end{algorithm}

\subsection{Lifting Well-Formed Traces}

Having discovered a function set $F$, we now construct well-formed TSL$_f$ traces. 
This requires converting raw log sequences into traces $\sigma = \sigma_0 \sigma_1 \cdots \sigma_n$ where each $\sigma_i \subseteq \mathcal{T}_u \cup \mathcal{T}_p$ contains the update and predicate terms at position $i$. 
Recall that for a trace to be well-formed, we must determinize how variables evolves: each timestep $i < |\sigma|-1$ can have exactly one update for each variable.
A challenge arises when multiple valid updates exist.
Consider $y$ at time$=$4 of \cref{fig:frozenlake}. 
Supposing $\mathtt{(+1)} \in F$, $[y \leftarrow y]_{t_4}$ is a valid update, as is $[y \leftarrow x+1]_{t_4}$.
While we know the ``correct assignment" to be the former, this is not explicit.

We remedy this by adopting a selection strategy that determinizes a single interpretation, ranking function applications by frequency. For each variable $v$, we rank function-input pairs by how frequently across all traces they are valid updates for $v$. 
If multiple updates are valid at a timestep, we select the highest-ranked option.
This score is aggregated across all logs presented. 
Algorithm~\ref{alg:trace-construction} formalizes this process, with Algorithm~\ref{alg:rankings} presenting the ranking mechanism. 

  \begin{algorithm}[t]
  \caption{ComputeRankings}
  \label{alg:rankings}
  \KwIn{Logs $\Pi$, Variables $S$, functions $F$}
  \KwOut{Rankings $\texttt{rank}$}
  \ForEach{$v \in S$}{
    $\texttt{rank}[v] \gets \{\}$\;
    \ForEach{$\pi \in \Pi$}{
      \ForEach{$t \in [0,|\pi|)$}{
          \ForEach{$(f, \vec{w}) \in F \times \text{InputCombinations}(v)$}{
                \If{$f(\texttt{val}^\pi_{t}(\vec{w})) = \texttt{val}^\pi_{t+1}(v)$}{
                    $\texttt{rank}[v][(f, \vec{w})] \mathrel{++}$\;
                }
          }
      }
    }
    $\texttt{sort}(\texttt{rank}[v], \lambda x. y.x > y)$ \tcp*{sort \texttt{rank[v]} descending}
  }
  \Return{$\texttt{rank}$}
  \end{algorithm}

While functions need to be discovered due to the infinitely many candidate configurations, predicates are tied to data types. Integers, for example, have equality \texttt{(=)} and ordering \texttt{(<)} defined. When constructing well-formed traces, we apply the predicates between all variables of the same type at every timestep.


\vspace{-0.2cm}

\subsection{Learning Specifications}

With well-formed TSL$_f$ traces, we now have a Booleanized interpretation of system behavior. This allows us to call specification mining solvers such as Bolt~\cite{bathie2025bolt},
treating update and predicate terms as atomic propositions. Bolt is a state-of-the-art mining tool that conducts a two-phased bounded search: temporal formulas are enumerated up to a bounded formula size, at which point Boolean compositions are evaluated as an instance of the Boolean Subset Cover problem. 


For finding a discriminant specification, this is sufficient. However, for synthesizing a \emph{reactive} program, a minimal discriminative formula is often an underspecifation. On \textsc{FrozenLake} for example, we would frequently only mine the goal condition ($\mathbf{F}\ \texttt{(=) p g}$) since this is the smallest discriminative formula. Yet synthesizing a controller from this specification yields an underspecified agent: the formula says nothing about reactive conditions (i.e. always avoid holes).

What we require is 
a specification that captures both what the agent must achieve and how it must react. This motivates decomposing the mining problem into two components: a \emph{liveness} condition $\mathbf{F}\psi$, encoding a goal to eventually reach, and a \emph{safety} condition $\mathbf{G}\,\varphi$, encoding invariants that always hold. Together, these form a specification $\mathbf{G}\varphi \land \mathbf{F}\psi$ that 
synthesize a winning controller. This structure aligns with Generalized Reactivity of Rank 1 specifications~\cite{bloem2012synthesis}.

We extend Bolt to mine safety and liveness specifications.
For liveness, we conduct the same bounded search but evaluate candidate formulas $\psi$ as $\mathbf{F}\psi$, seeking the smallest discriminative formula.
For safety mining, we evaluate candidates $\varphi$ wrapped as $\mathbf{G}\varphi$, additionally restricting the search grammar to Boolean connectives ($\land$, $\lor$, $\neg$, $\rightarrow$, $\leftrightarrow$) and next ($\mathbf{X}$) to ensure purely invariant properties. The search terminates upon finding the smallest formula of each type.

%% file: sections/6-evaluation.tex
\section{Evaluation}\label{sec:eval}

  \begin{figure}[t]
      \centering
      \includegraphics[width=1\linewidth]{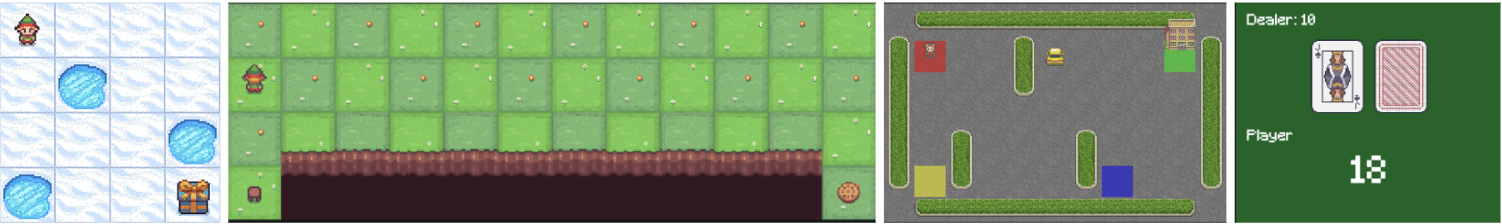}
      \caption{Full OpenAI-Gym ToyText~\cite{openai2016gym} suite. We evaluate on generalized instances. Left to right: \textsc{FrozenLake}, \textsc{CliffWalking}, \textsc{Taxi}, and \textsc{Blackjack}.}
      \label{fig:toytext}
      \vspace{-0.4cm}
  \end{figure}

We evaluate TSL$_f$ specification mining as a \emph{symbolic policy learning framework} across the full suite of environments from the OpenAI-Gym ToyText benchmarks~\cite{openai2016gym}. Mining specifications from traces, we synthesize reactive controllers which we deploy on unseen environment configurations. Trace examples and generation procedures are presented in \Cref{app:traces}. We collate total instances where action mechanics (legal moves) are respected and the objective is achieved. This quantifiably assesses our procedure's ability to discover both the action functions and the objective mechanics of these environments. 

\vspace{0.15em}
\noindent\textbf{Baseline Methods}
  We benchmark against \emph{passive} imitation learning methods.
  These baselines map to the same tasks, but their learning paradigms differ from ours as they learn localized policies from state-action samples. We also note that these methods learn exclusively from
  positive demonstrations, whereas our approach learns to discriminative specifications from positive and negative traces.
  
  \begin{itemize}
    \item \textbf{Stochastic Mealy Machine Learning (Alergia):} Alergia~\cite{carrasco1994learning} is a classical algorithm for learning probabilistic automata from positive examples by merging statistically compatible states in a prefix tree acceptor. 
    We require stochastic automata learning as our demonstrations can exhibit different actions from identical states (multiple valid paths exist to the goal). We use the AALp Alergia implementation~\cite{muskardin2022aalpy}
    We sample actions according to transition probabilities; for unseen states, we select actions uniformly at random.
      \item \textbf{Behavioral Cloning (NN):} Behavioral Cloning represents the neural approach to imitation learning~\cite{pomerleau1991efficient,hussein2017imitation}.
      with applications to game-playing~\cite{hester2018deep} and robotics~\cite{zhang2018deep}. We construct a feedforward neural network with two hidden layers (64 units each, ReLU activations) trained via supervised learning on state-action pairs. Training uses cross-entropy loss, Adam optimizer ($\alpha=0.001$), and early stopping on a 20\% validation split. 

      \item \textbf{Decision Tree (CART):} A CART decision tree~\cite{breiman1984cart} trained on state-action pairs.
      We use Gini impurity splitting with cost-complexity pruning. 

      \item \textbf{Bit-Blasting (LTL-BB):} To isolate the impact of TSL$_f$ as an expressive abstraction mechanism, we compare against bit-blast LTL$_f$ mining, encoding integer variables as Boolean bits which we treat as atomic propositions.
  \end{itemize}
  A distinction between these baselines and ours is that TSL$_f$ mining discovers action semantics, whereas these baselines assume
   provided action labels. By framing the task as ``given a state, select among known actions,'' these baselines receive additional supervision that our
  approach must infer from raw traces.

  \vspace{0.5em}
  \noindent\textbf{Strategy Synthesis and Deployment}
  Given a mined TSL$_f$ specification $\varphi_n$, we synthesize a reactive controller using Issy~\cite{heim2025issy}.
  Issy synthesizes $\varphi_n$ into a finite-state transducer that maps environment observations to actions.   We deploy this learned strategy on unseen test configurations. For grid-world games, we embed discovered update terms and mined liveness-safety specifications in a barebones skeleton template which specifies starting coordinates and grid bounds. A similar template is constructed for Blackjack. Templates are presented in \Cref{app:synt-templates}. 


\begin{figure}[t]
    \centering
    \includegraphics[width=0.9\linewidth]{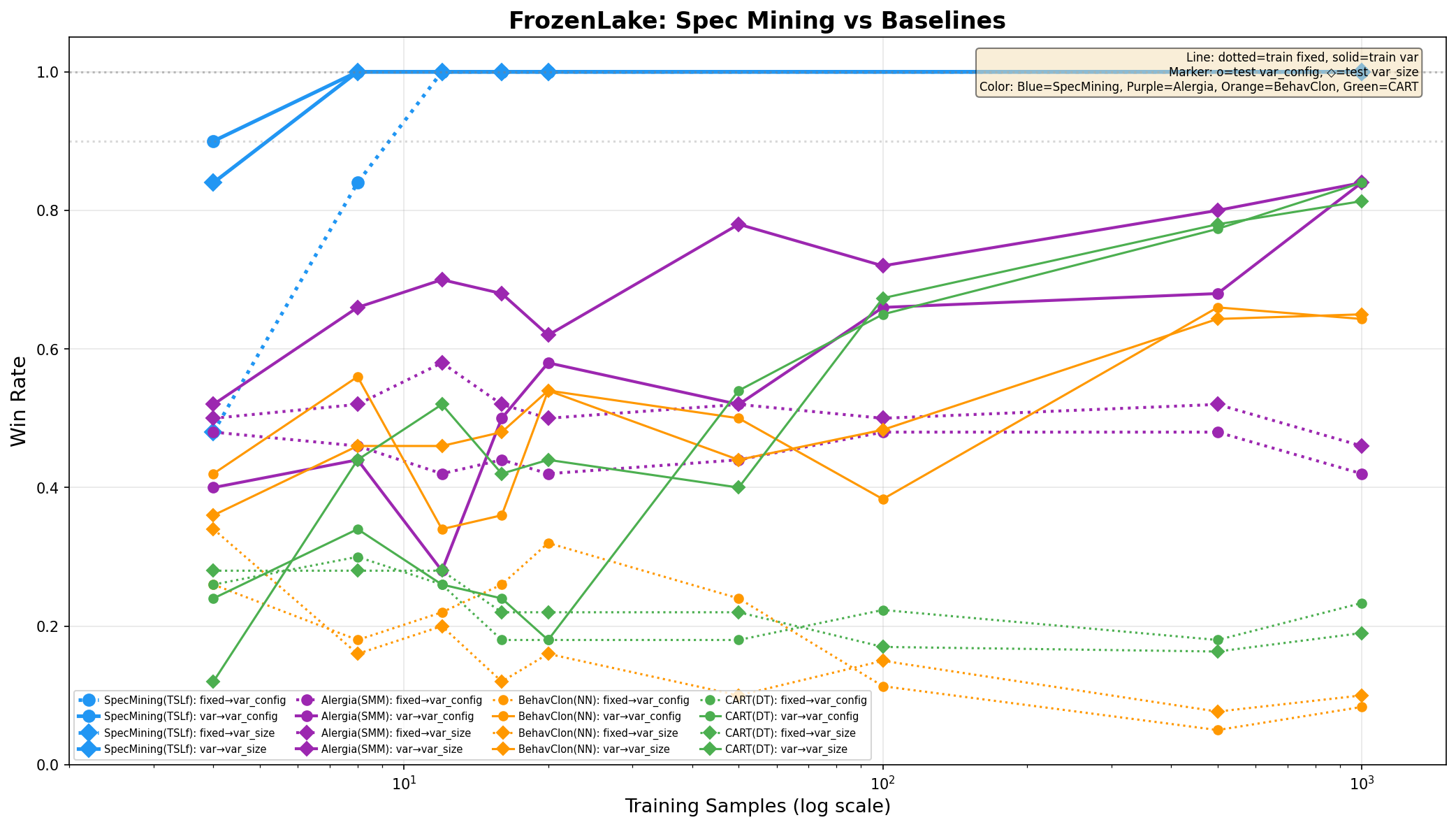}
    \caption{\textsc{FrozenLake} varied configuration results. Specification mining is more sample-efficient and generalizes better than imitation learning baselines.}
    \label{fig:frozen-lake-results}
    \vspace{-0.5cm}
\end{figure}

\vspace{0.5em}
\noindent\textbf{Results} We report results across the ToyText suite, evaluating generalization to unseen configurations, sample-efficiency, and performance of temporal-relational abstraction against localized state-action learning. For each environment, we compare against the learning baselines and examine specifications produced by LTL-BB. Statistics on learning and synthesis times are reported in \Cref{app:timing}

In \textsc{FrozenLake}, a player navigates a grid from start to goal without stepping on holes. The fixed instance is a $4\times4$ grid with holes at $(1,1)$, $(0,3)$, and $(3,2)$, and the goal at $(3,3)$. We evaluate generalization to unseen boards: \texttt{var\_conf} randomizes goal/hole positions on the grid, and \texttt{var\_size} additionally varies dimensions from $3\times3$ to $5\times5$.
TSL$_f$ mines from $n$ positive and $n$ negative traces (total $2n$), while imitation baselines learn from $n$ positive traces. 

\Cref{tab:frozen_lake_combined} reports win rates over 50 test instances under the two generalization settings. Specification Mining with TSL$_f$ achieves 100\% accuracy with 24 traces (12 positive, 12 negative) across all configurations, perfectly learning the environment policy of reaching the goal state while avoiding the holes. Baselines trained on varied configurations plateau between 70--85\% with 1000 samples; when trained on fixed configurations, none exceed 50\%.

The gap reflects what each method learns. Imitation baselines encode state-action mappings: a policy trained with holes at $(1,1), (0,3), (3,2)$ memorizes those avoidance maneuvers, with no transferability When hole positions change. TSL$_f$ mines relational specifications over predicates comparing player and hole coordinates. \Cref{tab:mined-specs-comparison} shows the mined safety formula (the biconditional encodes mutual exclusion; it can only be satisfied if the player visits none of the holes). This transfers to novel placements as the specification is defined over relations. Bit-blasting produces formulas such as $\mathbf{F}(x_{b1})$ (``eventually bit 1 of $x$ is true"). These are syntactically valid LTL$_f$ but lose any useful semantic interpretations.

\begin{table}[t]
\tiny
\centering
\resizebox{\textwidth}{!}{%
\begin{tabular}{lllccccccccc}
\toprule
\textbf{Method} & \textbf{Train} & \textbf{Test} & \textbf{4} & \textbf{8} & \textbf{12} & \textbf{16} & \textbf{20} & \textbf{50} & \textbf{100} & \textbf{500} & \textbf{1000} \\
\midrule
\multirow{4}{*}{\textbf{SpecMining(TSL$_f$)}} 
  & fixed & var\_conf & \cc{24} & \cc{42} & \cc{50} & \cc{50} & \cc{50} & \ccd & \ccd & \ccd & \ccd \\
  & var\_conf & var\_conf & \cc{45} & \cc{50} & \cc{50} & \cc{50} & \cc{50} & \ccd & \ccd & \ccd & \ccd \\
  & fixed & var\_size & \cc{24} & \cc{50} & \cc{50} & \cc{50} & \cc{50} & \ccd & \ccd & \ccd & \ccd \\
  & var\_size & var\_size & \cc{42} & \cc{50} & \cc{50} & \cc{50} & \cc{50} & \ccd & \ccd & \ccd & \ccd \\
\midrule
\multirow{4}{*}{Alergia(SMM)} 
  & fixed & var\_conf & \cc{24} & \cc{23} & \cc{21} & \cc{22} & \cc{21} & \cc{22} & \cc{24} & \cc{24} & \cc{21} \\
  & var\_conf & var\_conf & \cc{20} & \cc{22} & \cc{14} & \cc{25} & \cc{29} & \cc{26} & \cc{33} & \cc{34} & \cc{42} \\
  & fixed & var\_size & \cc{25} & \cc{26} & \cc{29} & \cc{26} & \cc{25} & \cc{26} & \cc{25} & \cc{26} & \cc{23} \\
  & var\_size & var\_size & \cc{26} & \cc{33} & \cc{35} & \cc{34} & \cc{31} & \cc{39} & \cc{36} & \cc{40} & \cc{42} \\
\midrule
\multirow{4}{*}{BehavClon(NN)} 
  & fixed & var\_conf & \cc{13} & \cc{9} & \cc{11} & \cc{13} & \cc{16} & \cc{12} & \cc{6} & \cc{3} & \cc{4} \\
  & var\_conf & var\_conf & \cc{21} & \cc{28} & \cc{17} & \cc{18} & \cc{27} & \cc{25} & \cc{19} & \cc{33} & \cc{32} \\
  & fixed & var\_size & \cc{17} & \cc{8} & \cc{10} & \cc{6} & \cc{8} & \cc{5} & \cc{8} & \cc{4} & \cc{5} \\
  & var\_size & var\_size & \cc{18} & \cc{23} & \cc{23} & \cc{24} & \cc{27} & \cc{22} & \cc{24} & \cc{32} & \cc{32} \\
\midrule
\multirow{4}{*}{CART(DT)} 
  & fixed & var\_conf & \cc{13} & \cc{15} & \cc{13} & \cc{9} & \cc{9} & \cc{9} & \cc{11} & \cc{9} & \cc{12} \\
  & var\_conf & var\_conf & \cc{12} & \cc{17} & \cc{13} & \cc{12} & \cc{9} & \cc{27} & \cc{33} & \cc{39} & \cc{42} \\
  & fixed & var\_size & \cc{14} & \cc{14} & \cc{14} & \cc{11} & \cc{11} & \cc{11} & \cc{8} & \cc{8} & \cc{9} \\
  & var\_size & var\_size & \cc{6} & \cc{22} & \cc{26} & \cc{21} & \cc{22} & \cc{20} & \cc{34} & \cc{39} & \cc{41} \\
\bottomrule
\end{tabular}%
}
\vspace{0.5em}
\caption{\textsc{FrozenLake} win rates (out of 50 test instances). \texttt{var\_conf}: fixed $4 \times 4$ board with varied hole/goal positions. \texttt{var\_size}: $3 \times 3$ to $5 \times 5$ boards with varied positions. Columns show number of training samples.}
\label{tab:frozen_lake_combined}
\vspace{-0.5cm}
\end{table}

  \begin{table*}[t]
\centering
\tiny
\resizebox{\textwidth}{!}{
\begin{tabular}{@{}l p{3.5cm} p{7cm}@{}}
\toprule
\textbf{Method} & \textbf{Liveness} & \textbf{Safety} \\
\midrule
\textbf{TSL$_f$}
& $\mathbf{F}\,((\texttt{eq}\,x\,\texttt{goalx}) \land (\texttt{eq}\,y\,\texttt{goaly}))$
&  $\mathbf{G}\, \neg ((\texttt{eq}\,x\,\texttt{hole1x}) \land (\texttt{eq}\,y\,\texttt{hole1y}))$ \\
& & $\mathbf{G}\,(((\texttt{eq}\,x\,\texttt{hole0x}) \land (\texttt{eq}\,y\,\texttt{hole0y})) \leftrightarrow (((\texttt{eq}\,x\,\texttt{hole1x}) \land (\texttt{eq}\,y\,\texttt{hole1y})) ))$ \\
& & $\mathbf{G}\,(((\texttt{eq}\,x\,\texttt{hole2x}) \land (\texttt{eq}\,y\,\texttt{hole2y})) \leftrightarrow (((\texttt{eq}\,x\,\texttt{hole0x}) \land (\texttt{eq}\,y\,\texttt{hole0y})) \lor ((\texttt{eq}\,x\,\texttt{hole1x}) \land (\texttt{eq}\,y\,\texttt{hole1y}))))$ \\
\midrule
LTL-BB
& $\mathbf{F}\,(\texttt{x\_b1})$
& $\mathbf{G}\,((\mathbf{X}\,\texttt{goalX\_b2}) \to (\texttt{x\_b1}))$ \\
& $\mathbf{F}\,((\texttt{y\_b0}) \land (\mathbf{X}\,\texttt{x\_b1}))$
& $\mathbf{G}\,((\texttt{y\_b0}) \to ((\texttt{y\_b1}) \leftrightarrow (\mathbf{X}\,\texttt{y\_b0})))$ \\
& $\mathbf{F}\,(\mathbf{G}\,((\texttt{y\_b0}) \land (\texttt{y\_b1})))$
& $\mathbf{G}\,((\mathbf{X}\,\texttt{goalX\_b2}) \to ((\texttt{y\_b0}) \land (\texttt{y\_b1})))$ \\
& $\mathbf{F}\,((\texttt{goalY\_b0}) \leftrightarrow (\texttt{hole1Y\_b1}))$
& $\mathbf{G}\,((\texttt{goalY\_b0}) \leftrightarrow (\texttt{hole1Y\_b1}))$ \\
\bottomrule
\end{tabular}
}
\vspace{0.5em}
\caption{Unique specifications found by TSL$_f$ and LTL-BB on \textsc{FrozenLake}.}
\label{tab:mined-specs-comparison}
\vspace{-1cm}
\end{table*}

\begin{table}[t]
\tiny
\centering
\resizebox{\textwidth}{!}{%
\begin{tabular}{lllcccccccccc}
\toprule
\textbf{Method} & \textbf{Train} & \textbf{Test} & \textbf{4} & \textbf{8} & \textbf{12} & \textbf{16} & \textbf{20} & \textbf{50} & \textbf{100} & \textbf{200} & \textbf{500} & \textbf{1000} \\
\midrule
\multirow{4}{*}{\textbf{SpecMining(TSL$_f$)}}
  & fixed & var\_size & \cc{17} & \cc{17} & \cc{17} & \cc{17} & \cc{17} & \cc{-} & \cc{-} & \cc{-} & \cc{-} & \cc{-} \\
  & var\_size & var\_size & \cc{50} & \cc{50} & \cc{50} & \cc{50} & \cc{50} & \cc{-} & \cc{-} & \cc{-} & \cc{-} & \cc{-} \\
  & var\_mov\_fixed & var\_size & \cc{17} & \cc{17} & \cc{17} & \cc{17} & \cc{17} & \cc{-} & \cc{-} & \cc{-} & \cc{-} & \cc{-} \\
  & var\_mov\_size & var\_size & \cc{0} & \cc{50} & \cc{50} & \cc{50} & \cc{50} & \cc{-} & \cc{-} & \cc{-} & \cc{-} & \cc{-} \\
\midrule
\multirow{4}{*}{Alergia(SMM)}
  & fixed & var\_size & \cc{19} & \cc{21} & \cc{22} & \cc{22} & \cc{21} & \cc{20} & \cc{18} & \cc{19} & \cc{20} & \cc{20} \\
  & var\_size & var\_size & \cc{50} & \cc{50} & \cc{50} & \cc{50} & \cc{50} & \cc{50} & \cc{50} & \cc{50} & \cc{50} & \cc{50} \\
  & var\_mov\_fixed & var\_size & \cc{22} & \cc{22} & \cc{24} & \cc{18} & \cc{18} & \cc{18} & \cc{14} & \cc{3} & \cc{0} & \cc{0} \\
  & var\_mov\_size & var\_size & \cc{20} & \cc{20} & \cc{18} & \cc{25} & \cc{22} & \cc{26} & \cc{18} & \cc{4} & \cc{0} & \cc{0} \\
\midrule
\multirow{4}{*}{BehavClon(NN)}
  & fixed & var\_size & \cc{5} & \cc{5} & \cc{2} & \cc{7} & \cc{7} & \cc{7} & \cc{7} & \cc{8} & \cc{8} & \cc{7} \\
  & var\_size & var\_size & \cc{20} & \cc{23} & \cc{43} & \cc{47} & \cc{50} & \cc{43} & \cc{47} & \cc{47} & \cc{23} & \cc{17} \\
  & var\_mov\_fixed & var\_size & \cc{7} & \cc{7} & \cc{7} & \cc{3} & \cc{7} & \cc{3} & \cc{3} & \cc{3} & \cc{0} & \cc{3} \\
  & var\_mov\_size & var\_size & \cc{17} & \cc{27} & \cc{43} & \cc{50} & \cc{50} & \cc{50} & \cc{30} & \cc{33} & \cc{23} & \cc{17} \\
\midrule
\multirow{4}{*}{CART(DT)}
  & fixed & var\_size & \cc{2} & \cc{2} & \cc{2} & \cc{2} & \cc{2} & \cc{2} & \cc{2} & \cc{2} & \cc{2} & \cc{2} \\
  & var\_size & var\_size & \cc{20} & \cc{27} & \cc{40} & \cc{50} & \cc{50} & \cc{47} & \cc{37} & \cc{37} & \cc{23} & \cc{20} \\
  & var\_mov\_fixed & var\_size & \cc{3} & \cc{3} & \cc{3} & \cc{0} & \cc{0} & \cc{3} & \cc{3} & \cc{3} & \cc{3} & \cc{3} \\
  & var\_mov\_size & var\_size & \cc{13} & \cc{27} & \cc{50} & \cc{50} & \cc{50} & \cc{47} & \cc{33} & \cc{37} & \cc{23} & \cc{20} \\
\bottomrule
\end{tabular}%
}
\vspace{0.5em}
\caption{\textsc{CliffWalking} win counts from training over fixed boards and varied board configurations. Test on 50 \texttt{var\_size} instances (cliff height $h\in[1,3]$,
board width $w\in[3,12]$, board length $l\in[4,6]$). }
\label{tab:cliff_walking_var_config}
\vspace{-0.5cm}
\end{table}


\begin{table*}[h]
\centering
\tiny
\resizebox{\textwidth}{!}{
\begin{tabular}{@{}l p{3.5cm} p{7cm}@{}}
\toprule
\textbf{Method} & \textbf{Liveness} & \textbf{Safety} \\
\midrule
\textbf{TSL$_f$}
& $\mathbf{F}\,((\texttt{eq}\,x\,\texttt{goalx}) \land (\texttt{eq}\,y\,\texttt{goaly}))$
&  $\mathbf{G}\,((\texttt{eq}\,x\,\texttt{goaly}) \lor ((\texttt{eq}\,y\,\texttt{goaly}) \to (\texttt{eq}\,x\,\texttt{goalx})))$ \\
& & $\mathbf{G}\,((\texttt{eq}\,x\,\texttt{goalx}) \lor ((\texttt{lt}\,y\,\texttt{cliffHeight}) \to (\texttt{eq}\,x\,\texttt{goalx})))$ \\
\midrule
LTL-BB
& $\mathbf{F}\,(\mathbin{\mathbf{G}} ((\texttt{x\_b0}) \land (\texttt{x\_b1})))$
& $\mathbf{G}\,((\mathbf{X}\,\texttt{cliffHeight\_b1}) \to ((\texttt{x\_b0}) \land (\texttt{x\_b1})))$ \\
& $\mathbf{F}\,(\mathbin{\mathbf{G}} ((\texttt{x\_b0}) \land (\texttt{x\_b3})))$
& $\mathbf{G}\,((\mathbf{X}\,\texttt{cliffHeight\_b1}) \to ((\texttt{x\_b0}) \land (\texttt{x\_b3})))$ \\
& $\mathbf{F}\,(\neg ((\mathbf{X}\,\texttt{x\_b3}) \to (\texttt{y\_b1})))$
& $\mathbf{G}\,((\mathbf{X}\,\texttt{cliffHeight\_b2}) \to (\texttt{x\_b3}))$ \\
\bottomrule
\end{tabular}
}
\vspace{0.5em}
\caption{Unique specifications found by TSL$_f$ and LTL-BB on \textsc{CliffWalking}.}
\label{tab:mined-specs-cliff-comparison}
\vspace{-1cm}
\end{table*}

In \textsc{CliffWalking}, a player starts at $(0,0)$ and must make his way to the rightmost $(w,0)$ point, circumnavigating a cliff. The fixed instance stipulates a $12\times4$ grid, with the player at $(0,0)$, the goal at $(11,0)$, and a cliff height of 1. We vary cliff height $h \in [1,3]$, board width $w \in [3,12]$, and board length $l \in [4,6]$. 

TSL$_f$ trained on varied configurations achieves perfect generalization. However, training on fixed configurations (cliff height 1) yields only 34\% on \texttt{var\_size} tests. The mined specification from fixed-configuration training (\Cref{tab:mined-specs-cliff-comparison}) states that ``either $x = 0$ or $y = 0$ implies $x = \mathit{goal}_x$". 
While this encapsulates winning strategies in the fixed case, increasing the cliff height extends the hazard region to $[0, h-1]$,
changing the semantic relationship between variables and the cliff. \Cref{tab:mined-specs-cliff-comparison} also shows imitation learning baselines learning perfect policies with \texttt{var\_size} training, but not fixed training.  Bit-blasted specifications for \textsc{CliffWalking} exhibit similar semantic limitations to \textsc{FrozenLake}.

We additionally evaluate a modified environment \texttt{var\_mov} with non-standard dynamics. Here, moving right updates $x \mapsto 2x+1$, moving up changes $y \mapsto y+2$. Function discovery recovers these transformations, constructing TSL$_f$ traces containing exactly the updates 
$[x \leftarrow x]$, $[x \leftarrow (x+x)+1]$, $[x \leftarrow x-1]$, $[y \leftarrow y]$, $[y \leftarrow y+2]$, $[y \leftarrow y-1]$. 
Synthesized controllers automatically respect these changed dynamics without modification to the mining or synthesis procedures.

It is worthwhile to note that in this setting, baseline performance decreases with additional samples. With movement functions $x \mapsto 2x+1$ and $y \mapsto y+2$, generating diverse traces requires extensive backtracking (each large forward step necessitates multiple smaller backward steps to produce distinct trajectories). Here, Alergia, which learns transition probabilities from state-action frequencies, increasingly observes backward moves for the same states, and thus learns this state-action mapping. 
Decision trees and neural networks exhibit similar degradation. Specification mining with TSL$_f$ is more robust, reasoning over temporal-relational global structures rather than local transition frequencies.

\begin{table*}[t]
\centering
\tiny
\resizebox{\textwidth}{!}{%
\begin{tabular}{lllcccccccccc}
\toprule
\textbf{Method} & \textbf{Train} & \textbf{Test} & \textbf{4} & \textbf{8} & \textbf{12} & \textbf{16} & \textbf{20} & \textbf{50} & \textbf{100} & \textbf{200} & \textbf{500} & \textbf{1000} \\
\midrule
\multirow{2}{*}{\textbf{SpecMining(TSL$_f$)}}
  & fixed & var\_pos & \cc{2} & \cc{31} & \cc{50} & \cc{50} & \cc{50} & \cc{-} & \cc{-} & \cc{-} & \cc{-} & \cc{-} \\
  & var\_pos & var\_pos & \cc{2} & \cc{50} & \cc{50} & \cc{50} & \cc{50} & \cc{-} & \cc{-} & \cc{-} & \cc{-} & \cc{-} \\
\midrule
\multirow{2}{*}{Alergia(SMM)}
  & fixed & var\_pos & \cc{14} & \cc{19} & \cc{5} & \cc{14} & \cc{17} & \cc{14} & \cc{17} & \cc{17} & \cc{23} & \cc{19} \\
  & var\_pos & var\_pos & \cc{27} & \cc{27} & \cc{23} & \cc{18} & \cc{23} & \cc{29} & \cc{32} & \cc{28} & \cc{30} & \cc{31} \\
\midrule
\multirow{2}{*}{BehavClon(NN)}
  & fixed & var\_pos & \cc{0} & \cc{0} & \cc{0} & \cc{1} & \cc{0} & \cc{0} & \cc{2} & \cc{1} & \cc{0} & \cc{0} \\
  & var\_pos & var\_pos & \cc{2} & \cc{4} & \cc{3} & \cc{12} & \cc{0} & \cc{9} & \cc{9} & \cc{14} & \cc{17} & \cc{14} \\
\midrule
\multirow{2}{*}{CART(DT)}
  & fixed & var\_pos & \cc{1} & \cc{1} & \cc{1} & \cc{2} & \cc{1} & \cc{1} & \cc{1} & \cc{1} & \cc{1} & \cc{1} \\
  & var\_pos & var\_pos & \cc{2} & \cc{0} & \cc{1} & \cc{2} & \cc{2} & \cc{2} & \cc{4} & \cc{13} & \cc{10} & \cc{13} \\
\midrule
\multirow{2}{*}{BehavClon(NN)$^*$}
  & fixed & var\_pos & \cc{5} & \cc{7} & \cc{2} & \cc{5} & \cc{4} & \cc{3} & \cc{2} & \cc{2} & \cc{3} & \cc{3} \\
  & var\_pos & var\_pos & \cc{3} & \cc{16} & \cc{0} & \cc{20} & \cc{6} & \cc{15} & \cc{20} & \cc{17} & \cc{28} & \cc{24} \\
\midrule
\multirow{2}{*}{CART(DT)$^*$}
  & fixed & var\_pos & \cc{1} & \cc{2} & \cc{2} & \cc{1} & \cc{3} & \cc{1} & \cc{4} & \cc{4} & \cc{0} & \cc{0} \\
  & var\_pos & var\_pos & \cc{3} & \cc{10} & \cc{0} & \cc{13} & \cc{5} & \cc{17} & \cc{18} & \cc{12} & \cc{26} & \cc{24} \\
\bottomrule
\end{tabular}%
}
\vspace{0.5em}
\caption{\textsc{Taxi} win counts from training over fixed boards and varied board configurations. Test on 50 \texttt{var\_pos} instances, varying the wall, passenger, and destinations positions. $*$ implies a dual-stage system with two baselines chained.}
\label{tab:taxi_var_pos}
\vspace{-0.5cm}
\end{table*}

\begin{table*}[t]
\centering
\tiny
\resizebox{\textwidth}{!}{
\begin{tabular}{@{}l p{4.8cm} p{5.8cm}@{}}
\toprule
\textbf{Method} & \textbf{Liveness} & \textbf{Safety} \\
\midrule
\textbf{TSL$_f$}
& $\mathbf{F}\,(\texttt{eq}\,x\,\texttt{DEST\_x} \land \texttt{eq}\,y\,\texttt{DEST\_y})$
& $\mathbf{G}\,(\texttt{eq}\,x\,\texttt{YELLOW\_x} \land \texttt{eq}\,y\,\texttt{YELLOW\_y})$ \\
& $\mathbf{F}\,((\texttt{eq}\,x\,\texttt{RED\_X} \land \texttt{eq}\,y\,\texttt{RED\_Y}) \land \mathbf{F}\,(\texttt{eq}\,x\,\texttt{DEST\_X} \land \texttt{eq}\,y\,\texttt{DEST\_Y}))$
& $\mathbf{G}\,((\texttt{eq}\,x\,\texttt{RED\_X} \land \texttt{eq}\,y\,\texttt{RED\_Y}) \leftrightarrow (\texttt{eq}\,x\,\texttt{YELLOW\_X} \land \texttt{eq}\,y\,\texttt{YELLOW\_Y}))$ \\
\midrule
LTL-BB
& $\mathbf{F}\,((\texttt{x\_b0}) \leftrightarrow (\texttt{RED\_x\_b1}))$
& $\mathbf{G}\,((\texttt{x\_b0}) \leftrightarrow (\texttt{RED\_x\_b1}))$ \\
& $\mathbf{F}\,((\texttt{DEST\_x\_b1}) \leftrightarrow (((\texttt{DEST\_y\_b0}) \leftrightarrow (\texttt{RED\_y\_b1})) \to (\texttt{DEST\_x\_b0})))$
& $\mathbf{G}\,((\texttt{DEST\_x\_b1}) \leftrightarrow (((\texttt{DEST\_y\_b0}) \leftrightarrow (\texttt{RED\_y\_b1})) \to (\texttt{DEST\_x\_b0})))$ \\
\bottomrule
\end{tabular}
}
\vspace{0.5em}
\caption{Unique specifications mined by TSL$_f$ and LTL-BB on \textsc{Taxi}. LTL-BB mining times out for $n \geq 12$.}
\label{tab:taxi-mined-specs-comparison}
\vspace{-1cm}
\end{table*}

In \textsc{Taxi}, the agent must pick up a passenger at one of four designated locations and drop them off at a destination in a $5\times5$ grid with walls, losing if they drop the passenger at the wrong destination. The fixed case has the agent starting at $(2,2)$, the passenger at $(0,0)$, the destination at $(4,0)$, and incorrect drop-offs at $(0,4)$ and $(3,4)$.  \Cref{tab:taxi_var_pos} reports results under \texttt{var\_pos} generalization, varying start position, wall placement, passenger location, and destination.

\begin{table*}[t]
\centering
\tiny
\resizebox{\textwidth}{!}{%
\begin{tabular}{@{}llccccc@{}}
\toprule
\textbf{Method} & \textbf{Strategy} & \textbf{4} & \textbf{8} & \textbf{12} & \textbf{16} & \textbf{20} \\
\midrule
\multirow{3}{*}{\textbf{SpecMining(TSL$_f$)}}
    & Threshold    & \ccp{100}{44} & \ccp{100}{44} & \ccp{100}{44} & \ccp{100}{44} & \ccp{100}{44} \\
    & Conservative & \ccp{100}{49} & \ccp{80}{44} & \ccp{100}{49} & \ccp{100}{49} & \ccp{100}{49} \\
    & Basic        & \ccp{88}{44} & \ccp{88}{44} & \ccp{100}{50} & \ccp{100}{50} & \ccp{100}{50} \\
\midrule
\multirow{3}{*}{Alergia(SMM)}
    & Threshold    & \ccp{100}{44} & \ccp{100}{44} & \ccp{100}{44} & \ccp{100}{44} & \ccp{100}{44} \\
    & Conservative & \ccp{100}{49} & \ccp{100}{49} & \ccp{100}{49} & \ccp{100}{49} & \ccp{100}{49} \\
    & Basic        & \ccp{88}{44} & \ccp{100}{50} & \ccp{100}{50} & \ccp{100}{50} & \ccp{100}{50} \\
\midrule
\multirow{3}{*}{BehavClon(NN)}
    & Threshold    & \ccp{92}{36} & \ccp{100}{44} & \ccp{88}{43} & \ccp{100}{44} & \ccp{92}{36} \\
    & Conservative & \ccp{58}{46} & \ccp{84}{46} & \ccp{94}{46} & \ccp{92}{50} & \ccp{90}{46} \\
    & Basic        & \ccp{88}{44} & \ccp{68}{47} & \ccp{98}{46} & \ccp{90}{48} & \ccp{88}{44} \\
\midrule
\multirow{3}{*}{CART(DT)}
    & Threshold    & \ccp{92}{36} & \ccp{100}{44} & \ccp{100}{44} & \ccp{100}{44} & \ccp{100}{44} \\
    & Conservative & \ccp{58}{46} & \ccp{94}{46} & \ccp{94}{46} & \ccp{94}{46} & \ccp{94}{46} \\
    & Basic        & \ccp{88}{44} & \ccp{58}{48} & \ccp{86}{44} & \ccp{90}{46} & \ccp{98}{46} \\
\bottomrule
\end{tabular}%
}
\vspace{0.5em}
\caption{Strategy adherence (win rate) on \textsc{Blackjack} for different training set sizes on 50 instances.
As strategy becomes more complex, win rates improve.}
\label{tab:blackjack-results}
\vspace{-0.6cm}
\end{table*}

\begin{table*}[t]
\centering
\tiny
\resizebox{\textwidth}{!}{
\begin{tabular}{@{}l l p{9cm}@{}}
\toprule
\textbf{Method} & \textbf{Strategy} & \textbf{Safety Specification} \\
\midrule
\multirow{6}{*}{TSL$_f$}
& Threshold
& $\mathbf{G}\, \neg ((\texttt{lt}\,\texttt{c}\,\texttt{sThresh}) \leftrightarrow (\mathbf{X}\,\texttt{stood}))$ \\[0.5em]
& \multirow{3}{*}{Conservative}
& $\mathbf{G}\, ((\texttt{lt}\,\texttt{c}\,\texttt{sThresh}) \to (\mathbf{X}\, ((\texttt{lt}\,\texttt{c}\,\texttt{sThresh}) \leftrightarrow (((\texttt{gte}\,\texttt{dealer}\,\texttt{lo}) \land (\texttt{lte}\,\texttt{dealer}\,\texttt{med})) \lor (\texttt{lt}\,\texttt{c}\,\texttt{sWeak})))))$ \\
& & $\mathbf{G}\, \neg ((\mathbf{X}\,\texttt{stood}) \leftrightarrow ((\texttt{lt}\,\texttt{c}\,\texttt{sThresh}) \land (((\texttt{gte}\,\texttt{dealer}\,\texttt{lo}) \land (\texttt{lte}\,\texttt{dealer}\,\texttt{med})) \to (\texttt{lt}\,\texttt{c}\,\texttt{sWeak}))))$ \\[0.5em]
& \multirow{3}{*}{Basic}
& $\mathbf{G}\, ((\texttt{lt}\,\texttt{c}\,\texttt{sWeak}) \lor (\mathbf{X}\, ((\texttt{eq}\,\texttt{c}\,\texttt{sWeak}) \leftrightarrow [\texttt{c} \leftarrow \texttt{c}])))$ \\
& & $\mathbf{G}\, \neg ((\mathbf{X}\,\texttt{stood}) \leftrightarrow ((\texttt{lt}\,\texttt{c}\,\texttt{sThresh}) \land (((\texttt{gte}\,\texttt{dealer}\,\texttt{lo}) \land (\texttt{lte}\,\texttt{dealer}\,\texttt{med})) \to (\texttt{lt}\,\texttt{c}\,\texttt{sWeak}))))$ \\
\midrule
\multirow{3}{*}{LTL-BB}
& Threshold
& $\mathbf{G}\, ((\texttt{c\_b4}) \to (\mathbf{X}\,\texttt{stood}))$ \\[0.5em]
& Conservative
& $\mathbf{G}\, ((\texttt{dealer\_b2}) \leftrightarrow (\texttt{c\_b3}))$ \\[0.5em]
& Basic
& $\mathbf{G}\, ((\texttt{dealer\_b1}) \leftrightarrow (\mathbf{X}\,\texttt{dealer\_b1}))$ \\
\bottomrule
\end{tabular}
}
\vspace{0.5em}
\caption{Specifications mined by TSL$_f$ and LTL-BB on \textsc{Blackjack}.\\
\textbf{Legend:} \texttt{c}$=$player card total, \texttt{lo}$=$2, \texttt{med}$=$6, \texttt{sThresh}$=$17, \texttt{sWeak}$=$12.}
\label{tab:blackjack-mined-specs}
\vspace{-1cm}
\end{table*}

This environment is interesting due to its temporally ordered goal condition: the agent must first navigate to the passenger and then to the goal.
Imitation baselines do not exceed 32\% with 1000 samples, whether trained on fixed or varied configurations. This failure is representational: these methods learn stateless mappings from observations to actions, which cannot encode ``go to passenger, then go to goal'' as a policy. 
To isolate this effect, we evaluate two-stage baselines (NN$^*$, DT$^*$) that chain separate models for each phase. One is trained on traces to the passenger, another on traces from passenger to destination. Performance improves to 56\%, suggesting an inability to capture temporal structure.  With 24 total traces, 
TSL$_f$ mines the sequential objective directly
: eventually reach the passenger, and from then eventually reach the destination (see \Cref{tab:taxi-mined-specs-comparison}). 
LTL-BB times out for $n \geq 12$ traces, otherwise deriving spurious bit equivalences.

\textsc{Blackjack}, the last of the Gymnasium environments, differs fundamentally from the preceding grid worlds. There are no spatial coordinates, no movement functions to discover, game outcomes are stochastic, and episodes last at most three moves. It is not a natural fit for temporal-relational reasoning: decisions depend on the local state (current hand, dealer card), with no sequential structure to exploit. Despite this, TSL$_f$ shows an ability to mine strategies. 

Since game outcomes are non-deterministic in \textsc{Blackjack}, we discriminate traces by \emph{strategy adherence}, separating by whether gameplay respects three strategies of increasing complexity: \emph{Threshold} (stand if hand $\geq 17$), \emph{Conservative} (incorporate dealer card into the decision), and \emph{Basic} (full basic strategy). For formal definitions of strategies, see \Cref{app:blackjack}. 
Because no deterministic goal discriminates the traces, there is no liveness condition to mine (liveness mining times out). We thus only mine safety specifications that constrain when to stand.
Moreover, we note that TSL$_f$ mines predicates relationally over variables, but the Blackjack strategies reference absolute thresholds (e.g., 17). To enable mining, we add constants to traces (see \cref{tab:blackjack-mined-specs} legend). Providing these allows the miner to express predicates against constant bounds.
This reflects a limitation of purely relational mining: TSL$_f$ cannot natively discover that 17 is a salient threshold from traces alone. Parametric identification techniques~\cite{asarin2011parametric} that infer predicate boundaries from data present a direction for future work.

\Cref{tab:blackjack-results} reports strategy adherence over 50 test instances. Alergia achieves optimal performance, an expected result as Blackjack decisions are memoryless and naturally represented by stochastic automata mappings. TSL$_f$ achieves 100\% adherence with 8--24 traces depending on strategy complexity, correctly learning all strategies. 
Bit-blasted specifications occasionally approximate correctness: $\mathbf{G}(c_{b4} \rightarrow (\mathbf{X}\,\mathtt{stood}))$ captures that bit 4 (hand $\geq 16$) relates to standing. 

%% file: sections/2-related.tex
\section{Related Works}\label{sec:related}



\noindent\textbf{Temporal Logics}
Linear Temporal Logic (LTL)~\cite{pnueli1977temporal} and its finite-trace variant LTL$_f$~\cite{giacomo2013ltlf} remain the standard formalisms for temporal specification, underpinning work in planning~\cite{bacchus2000using, aminof2025multiagent}, process mining~\cite{komatsu2024process}, and runtime verification~\cite{clarke1999model}. However, these logics are propositional, encoding events as Boolean atoms. Temporal Stream Logic (TSL)~\cite{santolucito2019tsl} separates control from data, enabling specifications over uninterpreted functions and predicates; TSL-MT~\cite{bernd2022tslmt} grounds these in background theories. Alternative data-aware extensions include LTL$_f$ Modulo-Theories ~\cite{geatti2022ltlfmt}, which supports monotonic counters, and data-LTL$_f$~\cite{gianola2025smt}, an SMT-backed variant with variable conditionals. Signal Temporal Logic (STL)~\cite{bartocci2022survey} is another interesting paradigm that reasons over continuous signals with quantitative semantics. Our work presents TSL$_f$, a finite-trace interpretation of TSL.

\vspace{0.5em}

\noindent\textbf{Specification Mining}
Specification mining extracts temporal properties from traces. Texada~\cite{lemieux2015general} performs template-based LTL mining; Bolt~\cite{bathie2025bolt} and Scarlet~\cite{neider2018learning} scale LTL$_f$ mining through set-cover formulations. These operate over Boolean abstractions, requiring hand-crafted atomic propositions. In the continuous domain, STL mining~\cite{bartocci2022survey} has received attention, with parametric identification techniques~\cite{asarin2011parametric} learning thresholds from signals. 

\vspace{0.5em}
\noindent\textbf{Syntax-Guided Synthesis}
SyGuS~\cite{alur2013syntax} techniques synthesize programs from semantic constraints and syntactic grammars, enabling Programming-by-Example techniques such as FlashFill~\cite{gulwani2011flashfill}. Modern solvers like CVC5~\cite{barbosa2022cvc5} support theories including linear arithmetic and bitvectors. SyGuS has been applied to motion planning~\cite{chasins2016using}. Das et al.~\cite{das2023combining} combine functional synthesis with automata learning to discover reactive programs from grid-world observations. Our work uses SyGuS as the primitive for function discovery, then lifts discovered functions into temporal specifications, melding syntax-guided and reactive synthesis.

%% file: sections/7-conclusion.tex
\section{Conclusion}

In this paper, we presented a framework for mining temporal specifications over rich datatypes from execution traces. By leveraging Syntax-Guided Synthesis to discover functions that explain variable evolutions, and formalizing TSL$_f$ as a finite-trace interpretation of Temporal Stream Logic, our approach enables specification mining that captures both data transformations and temporal structure. Our evaluation on the OpenAI-Gymnasium ToyText benchmarks demonstrates that reactive controllers synthesized from mined specifications achieve perfect accuracy on unseen configurations while requiring significantly fewer samples than passive learning baselines. As future work, a natural extension is closing the loop toward fully symbolic reinforcement learning, where an agent actively generates traces, mines specifications from experience, and iteratively refines its behavior through formal synthesis.

%% file: appendix/proofs.tex
\section{Thm.1 proof. TSL$_f$ is strictly less expressive than TSL.}\label{sec:thm1}
  
  \begin{proof}
  The inclusion TSL$_f$ $\subseteq$ TSL is immediate, since TSL$_f$ uses the same syntax as TSL with finite trace semantics. We show the inclusion is strict by exhibiting a TSL specification that is realizable in TSL but unrealizable in TSL$_f$.

  Consider the following TSL specification:
  $$
  \mathbf{G} \left( \mathbf{F} \, \update{x}{y} \land \mathbf{F} \, \update{x}{z} \right).
  $$

  This formula asserts that at every position, there will eventually be an update of $x$ from $y$ and an update of $x$ from $z$.

  \textbf{Realizable in TSL (infinite traces):}
  A system can satisfy this by alternating between the two updates indefinitely. For instance, the infinite computation pattern:
  $$\update{x}{y}, \update{x}{z}, \update{x}{y}, \update{x}{z}, \ldots$$
  satisfies the formula at every position, since from any position there is always a future position with each update.

  \textbf{Unrealizable in TSL$_f$ (finite traces):}
  Suppose for contradiction that some well-formed finite trace $\sigma$ of length $n$ satisfies the formula. Then:
  \begin{enumerate}
    \item By the semantics of $\mathbf{G}$, the formula must hold at every position, including the final position $n-1$:
    $$\sigma, n-1 \models \mathbf{F} \, \update{x}{y} \land \mathbf{F} \, \update{x}{z}$$

    \item By the strong semantics of $\mathbf{F}$ in TSL$_f$, where $\mathbf{F} \varphi \equiv \top \mathbin{\mathbf{U}} \varphi$, we have that $\mathbf{F} \, \update{x}{y}$ at position $n-1$ requires:
    $$\exists j \geq n-1. \, (j < n \land \sigma, j \models \update{x}{y})$$
    The only position $j$ satisfying $j \geq n-1$ and $j < n$ is $j = n-1$. Thus:
    $$\sigma, n-1 \models \update{x}{y}$$

    \item By symmetric reasoning, $\sigma, n-1 \models \update{x}{z}$.

    \item Therefore, both $\update{x}{y} \in \sigma_{n-1}$ and $\update{x}{z} \in \sigma_{n-1}$. This violates well-formedness.
  \end{enumerate}


  Thus, no well-formed finite trace can satisfy this specification, proving it is unrealizable in TSL$_f$. Since the specification is realizable in TSL but not in TSL$_f$, we conclude TSL$_f$ $\subsetneq$ TSL.
  \end{proof}

%% file: appendix/evalextra.tex
  \section{Trace Generation}\label{app:traces}

  All traces are recorded as sequences of state snapshots in \texttt{jsonl} format, where each state is a dictionary mapping variable names to integer values.
  For each environment and training size, we generate two sets of behavioral demonstrations.
  Positive traces are successful episodes that reach the goal state. For grid-based environments (\textsc{FrozenLake, CliffWalking, Taxi}), we use breadth-first search (BFS) to compute optimal paths, then add random detours to introduce trajectory diversity. Negative Traces are failed episodes generated by a mixture of random walks and deviations of positive trace BFS search. 
  For \textsc{Blackjack}, we use basic strategies (well-known deterministic policies based on hand totals). 
  Negative traces constitute traces that do not adhere to the strategy.

  \paragraph{Frozen Lake.}
  A winning trace navigating from start to goal, avoiding holes.

  \begin{lstlisting}[language=jsonabbr]
  {"x": 0, "y": 0, "goalx": 3, "goaly": 3, "h0x": 1, "h0y": 1, "h1x": 3, "h1y": 1, "h2x": 3, "h2y": 2}
  {"x": 1, "y": 0, "goalx": 3, "goaly": 3, "h0x": 1, "h0y": 1, "h1x": 3, "h1y": 1, "h2x": 3, "h2y": 2}
  {"x": 2, "y": 0, "goalx": 3, "goaly": 3, "h0x": 1, "h0y": 1, "h1x": 3, "h1y": 1, "h2x": 3, "h2y": 2}
  {"x": 2, "y": 1, "goalx": 3, "goaly": 3, "h0x": 1, "h0y": 1, "h1x": 3, "h1y": 1, "h2x": 3, "h2y": 2}
  {"x": 2, "y": 2, "goalx": 3, "goaly": 3, "h0x": 1, "h0y": 1, "h1x": 3, "h1y": 1, "h2x": 3, "h2y": 2}
  {"x": 2, "y": 3, "goalx": 3, "goaly": 3, "h0x": 1, "h0y": 1, "h1x": 3, "h1y": 1, "h2x": 3, "h2y": 2}
  {"x": 3, "y": 3, "goalx": 3, "goaly": 3, "h0x": 1, "h0y": 1, "h1x": 3, "h1y": 1, "h2x": 3, "h2y": 2}
  \end{lstlisting}

  \paragraph{Cliff Walking.}
  A winning trace navigating above the cliff to the goal.

  \begin{lstlisting}[language=jsonabbr]
  {"x": 0, "y": 0, "goalx": 11, "goaly": 0, "cliffXMin": 1, "cliffXMax": 10, "cliffHeight": 1}
  {"x": 0, "y": 1, "goalx": 11, "goaly": 0, "cliffXMin": 1, "cliffXMax": 10, "cliffHeight": 1}
  {"x": 1, "y": 1, "goalx": 11, "goaly": 0, "cliffXMin": 1, "cliffXMax": 10, "cliffHeight": 1}
  {"x": 2, "y": 1, "goalx": 11, "goaly": 0, "cliffXMin": 1, "cliffXMax": 10, "cliffHeight": 1}
  {"x": 3, "y": 1, "goalx": 11, "goaly": 0, "cliffXMin": 1, "cliffXMax": 10, "cliffHeight": 1}
  {"x": 4, "y": 1, "goalx": 11, "goaly": 0, "cliffXMin": 1, "cliffXMax": 10, "cliffHeight": 1}
  {"x": 5, "y": 1, "goalx": 11, "goaly": 0, "cliffXMin": 1, "cliffXMax": 10, "cliffHeight": 1}
  {"x": 6, "y": 1, "goalx": 11, "goaly": 0, "cliffXMin": 1, "cliffXMax": 10, "cliffHeight": 1}
  {"x": 7, "y": 1, "goalx": 11, "goaly": 0, "cliffXMin": 1, "cliffXMax": 10, "cliffHeight": 1}
  {"x": 8, "y": 1, "goalx": 11, "goaly": 0, "cliffXMin": 1, "cliffXMax": 10, "cliffHeight": 1}
  {"x": 9, "y": 1, "goalx": 11, "goaly": 0, "cliffXMin": 1, "cliffXMax": 10, "cliffHeight": 1}
  {"x": 10, "y": 1, "goalx": 11, "goaly": 0, "cliffXMin": 1, "cliffXMax": 10, "cliffHeight": 1}
  {"x": 11, "y": 1, "goalx": 11, "goaly": 0, "cliffXMin": 1, "cliffXMax": 10, "cliffHeight": 1}
  {"x": 11, "y": 0, "goalx": 11, "goaly": 0, "cliffXMin": 1, "cliffXMax": 10, "cliffHeight": 1}
  \end{lstlisting}

  \paragraph{Taxi.}
  A winning trace: taxi picks up passenger at RED, delivers to DEST.

  \begin{lstlisting}[language=jsonabbr]
  {"x": 2, "y": 2, "DEST_x": 4, "DEST_y": 0, "PASS_x": 0, "PASS_y": 0, "RED_x": 0, "RED_y": 0, "GREEN_x": 4, "GREEN_y": 0, "YELLOW_x": 0, "YELLOW_y": 4, "BLUE_x": 3, "BLUE_y":4}
  {"x": 1, "y": 2, "DEST_x": 4, "DEST_y": 0, "PASS_x": 0, "PASS_y": 0, "RED_x": 0, "RED_y": 0, "GREEN_x": 4, "GREEN_y": 0, "YELLOW_x": 0, "YELLOW_y": 4, "BLUE_x": 3, "BLUE_y":4}
  {"x": 0, "y": 2, "DEST_x": 4, "DEST_y": 0, "PASS_x": 0, "PASS_y": 0, "RED_x": 0, "RED_y": 0, "GREEN_x": 4, "GREEN_y": 0, "YELLOW_x": 0, "YELLOW_y": 4, "BLUE_x": 3, "BLUE_y":4}
  {"x": 0, "y": 1, "DEST_x": 4, "DEST_y": 0, "PASS_x": 0, "PASS_y": 0, "RED_x": 0, "RED_y": 0, "GREEN_x": 4, "GREEN_y": 0, "YELLOW_x": 0, "YELLOW_y": 4, "BLUE_x": 3, "BLUE_y":4}
  {"x": 0, "y": 0, "DEST_x": 4, "DEST_y": 0, "PASS_x": 0, "PASS_y": 0, "RED_x": 0, "RED_y": 0, "GREEN_x": 4, "GREEN_y": 0, "YELLOW_x": 0, "YELLOW_y": 4, "BLUE_x": 3, "BLUE_y":4}
  {"x": 1, "y": 0, "DEST_x": 4, "DEST_y": 0, "PASS_x": 0, "PASS_y": 0, "RED_x": 0, "RED_y": 0, "GREEN_x": 4, "GREEN_y": 0, "YELLOW_x": 0, "YELLOW_y": 4, "BLUE_x": 3, "BLUE_y":4}
  {"x": 2, "y": 0, "DEST_x": 4, "DEST_y": 0, "PASS_x": 0, "PASS_y": 0, "RED_x": 0, "RED_y": 0, "GREEN_x": 4, "GREEN_y": 0, "YELLOW_x": 0, "YELLOW_y": 4, "BLUE_x": 3, "BLUE_y":4}
  {"x": 3, "y": 0, "DEST_x": 4, "DEST_y": 0, "PASS_x": 0, "PASS_y": 0, "RED_x": 0, "RED_y": 0, "GREEN_x": 4, "GREEN_y": 0, "YELLOW_x": 0, "YELLOW_y": 4, "BLUE_x": 3, "BLUE_y":4}
  {"x": 4, "y": 0, "DEST_x": 4, "DEST_y": 0, "PASS_x": 0, "PASS_y": 0, "RED_x": 0, "RED_y": 0, "GREEN_x": 4, "GREEN_y": 0, "YELLOW_x": 0, "YELLOW_y": 4, "BLUE_x": 3, "BLUE_y":4}
  \end{lstlisting}

  \paragraph{Blackjack (Basic Strategy).}
  Hit at 11, stand at 15 vs weak dealer (showing 5).

  \begin{lstlisting}[language=jsonabbr]
  {"count": 11, "stood": false, "standThreshold": 17, "standVsWeakMin": 13, "isWeakDealer": true}
  {"count": 15, "stood": false, "standThreshold": 17, "standVsWeakMin": 13, "isWeakDealer": true}
  {"count": 15, "stood": true, "standThreshold": 17, "standVsWeakMin": 13, "isWeakDealer": true}
  \end{lstlisting}

    \section{Synthesis Templates}\label{app:synt-templates}

  \begin{lstlisting}[language=tsl,
      caption={\texttt{grid\_standard\_template.tsl}. \texttt{xMoves} and \texttt{yMoves} are disjunctions of update terms for \texttt{x} and \texttt{y} found in our TSL$_f$ mining process. \textsc{Taxi} template contains additional wall bounds.},
      basicstyle=\ttfamily\scriptsize,
      label={lst:grid.tsl},
      escapechar=|]

  inBounds = (gte x B_MIN) && (lte x B_MAX) && (gte y B_MIN) && (lte y B_MAX)
            // && !(eq x w0x && eq y w0y) && !(eq x w00x && eq y w00y) {Taxi}
            // && !(eq x w1x && eq y w1y) && !(eq x w11x && eq y w11y) {Taxi}
  xMoves = {discovered_x_updates}
  yMoves = {discovered_y_updates}
  assume {
      eq x START_X;
      eq y START_Y;
  }
  guarantee {
      G inBounds;                                       // Stay in bounds
      G ((xMoves && [y <- y]) || ([x <- x] && yMoves));     // Discovered updates
      {mined_tslf_specification};                       // Mined Objective
  }
  \end{lstlisting}

  Similarly for blackjack, the template sets bounds on potential card values, with the cards dealt as inputs to the system. The system controls \texttt{stood}:
  \begin{lstlisting}[language=tsl,
    caption={\texttt{blackjack.tsl}},
    basicstyle=\ttfamily\scriptsize,
    label={lst:blackjack.tsl}]
    
always assume {
    (gte handValue MIN_HAND) && (lte handValue MAX_HAND);
    (gte dealerCard MIN_DEALER) && (lte dealerCard MAX_DEALER);
}

guarantee {
    {mined_tslf_specification};
}
\end{lstlisting}

\section{Blackjack Strategies}\label{app:blackjack}

\begin{enumerate}
      \item \emph{Threshold Strategy.}
      \[
      \scriptsize
      \text{action}(c, d) =
      \begin{cases}
          \textsc{hit} & \text{if } c < 17 \\
          \textsc{stand} & \text{if } c \geq 17
      \end{cases}
      \]

      \item \emph{Conservative Strategy.}
      \[
      \scriptsize
      \text{action}(c, d) =
      \begin{cases}
          \textsc{stand} & \text{if } c \geq 17 \\
          \textsc{stand} & \text{if } c \geq 12 \land 2 \leq d \leq 6 \\
          \textsc{hit} & \text{if } c \leq 11 \\
          \textsc{hit} & \text{otherwise}
      \end{cases}
      \]

      \item \emph{Basic Strategy.}
      \[
      \scriptsize
      \text{action}(c, d) =
      \begin{cases}
          \textsc{stand} & \text{if } c \geq 17 \\
          \textsc{hit} & \text{if } c \leq 11 \\
          \textsc{stand} & \text{if } c \geq 12 \land 2 \leq d \leq 6 \\
          \textsc{hit} & \text{otherwise}
      \end{cases}
      \]
  \end{enumerate}

  \section{Evaluation Training Times}\label{app:timing}


   \begin{table}[H]
   \vspace{-0.5cm}
  \small
  \centering
  \begin{tabular}{llrrrrrrrrr}
  \toprule
  \textbf{Game} & \textbf{Method} & \textbf{4} & \textbf{8} & \textbf{12} & \textbf{16} & \textbf{20} & \textbf{50} & \textbf{100} & \textbf{500} & \textbf{1000} \\
  \midrule
  \multirow{4}{*}{\textsc{FrozenLake}}
    & TSL$_f$    & 3.1  & 2.2  & 3.0  & 3.8  & 5.0  & --   & --   & --  & --  \\
    & Alergia    & $<$.01 & $<$.01 & $<$.01 & $<$.01 & $<$.01 & .02 & .03 & .18 & .27 \\
    & BehavClone & .23 & .54 & 1.1  & 1.6  & 1.3  & 3.1  & 5.0  & 29  & 58  \\
    & CART       & $<$.01 & $<$.01 & $<$.01 & $<$.01 & $<$.01 & $<$.01 & $<$.01 & .09 & .13 \\
  \midrule
  \multirow{4}{*}{\textsc{CliffWalking}}
    & TSL$_f$    & 14   & 8.9  & 9.8  & 12   & 11   & --   & --   & --  & --  \\
    & Alergia    & $<$.01 & $<$.01 & $<$.01 & $<$.01 & $<$.01 & .05 & .02 & .09 & .24 \\
    & BehavClone & .84 & .58 & .83 & 1.2  & 1.3  & 3.6  & 8.4  & 45  & 98  \\
    & CART       & $<$.01 & $<$.01 & $<$.01 & $<$.01 & $<$.01 & $<$.01 & $<$.01 & .06 & .14 \\
  \midrule
  \multirow{4}{*}{\textsc{Taxi}}
    & TSL$_f$    & 9.4  & 8.9  & 14   & 12   & 13   & --   & --   & --  & --  \\
    & Alergia    & $<$.01 & $<$.01 & $<$.01 & $<$.01 & $<$.01 & .02 & .04 & .17 & .37 \\
    & BehavClone & 1.2  & .70 & 1.2  & 1.6  & 1.9  & 4.5  & 9.0  & 54  & 107 \\
    & CART       & $<$.01 & $<$.01 & $<$.01 & $<$.01 & $<$.01 & $<$.01 & .01 & .29 & .32 \\
  \midrule
  \multirow{4}{*}{\textsc{Blackjack}}
    & TSL$_f$    & .72 & .34 & .35 & .35 & 1.2  & --   & --   & --  & --  \\
    & Alergia    & $<$.01 & $<$.01 & $<$.01 & $<$.01 & $<$.01 & --   & --   & --  & --  \\
    & BehavClone & .74 & .15 & .16 & .16 & .35 & --   & --   & --  & --  \\
    & CART       & .36 & $<$.01 & $<$.01 & $<$.01 & $<$.01 & --   & --   & --  & --  \\
  \bottomrule
  \end{tabular}
  \vspace{0.5em}
  \caption{Training time (seconds) across all environments. Alergia and CART train in milliseconds; BehavClone scales linearly with $n$. TSL$_f$ requires 3--14s but achieves perfect generalization with 1--2 orders of magnitude
  fewer samples.}
  \label{tab:timing_all}
  \vspace{-0.5cm}
  \end{table}

    We report training times (in seconds) for all methods across environments.
  TSL$_f$ function discovery and specification mining, alongisde other baseline learning, was conducted locally on an Apple M1
  Pro processor. Controller synthesis via Issy was performed on Modal cloud
  infrastructure (x86-64, 4 vCPUs, 4GB RAM), averaging 21 minutes for FrozenLake,
  38 minutes for CliffWalking, 29 minutes for Taxi, and 3 minutes for Blackjack.